\def\F{\mathbb{F}}
\def\E{\mathbb{F}_{q_1}}
\def\Tr{\text{\rm Tr}}
\def\wt{\text{\rm wt}}
\def\Ker{\text{\rm Ker}}
\def\Im{\text{\rm Im}}
\def\D{\textrm{D}}
\def\Prj{\textrm{Prj}}
\newtheorem{theorem}{Theorem}[section]
\newtheorem{lemma}[theorem]{Lemma}
\newtheorem{example}[theorem]{Example}
\newtheorem{corollary}[theorem]{Corollary}
\newtheorem{proposition}[theorem]{Proposition}
\newtheorem{remark}[theorem]{Remark}
\numberwithin{equation}{section}
\title{Some three-weight linear codes and their complete weight enumerators and weight hierarchies }
\author{Xiumei Li$^{1}$, Zongxi Chen$^{1}$, Fei Li$^{2}$\footnote{Corresponding author.\protect\\
        E-mail address: lxiumei2013@qfnu.edu.cn(X. Li), cczxlf@163.com(F. Li)}\\
    \hspace{0.5cm}\small $^{1}$School of Mathematical Sciences, Qufu Normal University, Qufu  273165, China\\
    \small $^{2}$Faculty of School of Statistics and Applied Mathematics, \\ \small  Anhui University of Finance and Economics, Bengbu {\rm 233030}, Anhui, China}
\date{}
\begin{document}
\maketitle

\begin{abstract}
Linear codes with a few weights can be applied to secrete sharing, authentication codes, association schemes and strongly regular graphs. For an odd prime power $q$, we construct a class of three-weight $\F_q$-linear codes from quadratic functions via a bivariate construction and then determine the complete weight enumerators and weight hierarchies of these linear codes completely. This paper generalizes 
some results in Li et al. (2022) and Hu et al. (2024). 
\end{abstract}

{\bf Keywords}:~Linear code; Quadratic form; Complete weight enumerator; Weight hierarchy; Generalized Hamming weight.

{\bf MSC: } 94B05, 11T71

\section{Introduction}

\label{intro}


Let $q=p^m$ for an odd prime number $p$ and $ \mathbb{F}_{q^s} $ be the finite field with $ q^{s} $
elements. Denote by $\mathbb{F}_{q^s}^{*}$ the set of the nonzero elements of $ \mathbb{F}_{q^s}$ and $\Tr_q^{q^s}$ the trace function from $\mathbb{F}_{q^s}$ onto $\mathbb{F}_q$, respectively.

A $k$-dimensional subspace $C$ of $ \mathbb{F}_{q}^{n} $ over $\F_q$ is called an $[n,k,d]_q$ linear code of length $n$ with  minimum (Hamming) distance $d$, where $d$ determines the error correction capability of $C$. For $i\in\{1,2,\cdots,n\}$, let $A_{i}$ be the number of codewords in $C$ with Hamming weight $i$ and the sequence $(1,A_{1},\cdots,A_{n})$ be the weight distribution of $C$.
If there are $t$ nonzero elements in the sequence $(A_{1},\cdots,A_{n})$, then the code $C$ is called a $t$-weight code.
Let $\F_q=\{\omega_0=0,\omega_1,\cdots,\omega_{q-1}\}$. For a vector $c=(c_0,c_1,\cdots,c_{n-1})\in\F_q^n$, the complete weight enumerator $\omega[c]$ of $c$ is defined by 
$$\omega[c]=\omega_0^{k_0}\omega_1^{k_1}\cdots\omega_{q-1}^{k_{q-1}},$$
where $\sum\limits_{j=0}^{q-1}k_j=n,k_j$ is the number of components of $c$ that equals to $\omega_j$. The complete weight enumerator of $C$ is defined by 
$$\omega[C]=\sum\limits_{c\in C}\omega[c].$$
It's easy to see that the complete weight enumerator of a code provides more detailed information of the weight structure than the ordinary weight distribution.
The complete weight enumerator of a code $C$
could be used to calculate deception probabilities of certain authentication codes and study the Walsh transform of monomial functions over finite fields. Linear codes with a few weights have important applications in authentication codes \cite{DH07},
association schemes \cite{CG84}, secret sharing \cite{YD06} and strongly regular graphs \cite{CK86}. Thus, the determination of the complete weight enumerators of specific linear codes is a significant research topic in coding theory.

 For an $[n,k,d]_q$ linear code $C$, 
let
$ [C,r]_{q} $ be the set of all the $r$-dimensional subspaces of $C$, where $ 1\leq r\leq k$.
For a subspace $ H \in [C,r]_{q}$, the support of $H$ is defined by
$$ \textrm{Supp}(H)=\Big\{i:1\leq i\leq n, c_i\neq 0 \ \ \textrm{for some $c=(c_{1}, c_{2}, \cdots , c_{n})\in H$}\Big\}.$$
The $r$-th generalized Hamming weight of $C$ is defined by
$$
d_{r}(C)=\min\Big\{|\textrm{Supp}(H)|:H\in [C,r]_{q}\Big\}, \ 1\leq r\leq k.
$$
The sequence $ \{d_{1}(C),d_{2}(C),\cdots,d_{k}(C)\}$ is called the weight hierarchy of $C$. 
The weight hierarchies can be used to deal with $t$ - resilient functions, and trellis or branch complexity of linear codes \cite{CGHFRS85, TV95}, computation of state and branch complexity profiles \cite{KTFL93,F94} and so on.
 The weight hierarchies for some well-known classes of codes were determined, such as Hamming codes, Reed-Muller codes, Reed-Solomon codes,Golay codes and some cyclic codes \cite{BLV14,B19,CC97,HP98,JL97,WJ91,XL16,YL15}. However, determining the weight hierarchy of linear code is relatively challenging. During the past three decades, there were research results
about weight hierarchies of some classes of linear codes \cite{JF17,LZW23,LF18,LF21,LL20-0,LL21,LL22,LW19,WZ94,WW97}.

Let $ D= \{d_{1},d_{2},\cdots,d_{n}\}$ be a subset of $\mathbb{F}_{q^s}^{\ast}$, Ding et al. \cite{DN07} proposed a generic construction of linear codes as follows:
\begin{eqnarray}\label{defcode0}
         \small{C_{D}=\{\left( \Tr_q^{q^s}(xd_1), \Tr_q^{q^s}(xd_2),\ldots, \Tr_q^{q^s}(xd_{n})\right):x\in \mathbb{F}_{q^{s}}\},}
\end{eqnarray}
where $D$ is called the defining set of $C_{D}$.
In recent years, Ding and Niederreiter's construction was extended to the bivariate form, namely, 
\begin{eqnarray}\label{defcode1}
         C_{D}=\{( \Tr_{q}^{q^{s_1}}(ux)+\Tr_{q}^{q^{s_2}}(vy))_{(x,y)\in D}:(u,v)\in \F_{q^{s_1}}\times\F_{q^{s_2}}\},
\end{eqnarray}
where $s_i(i=1,2)$ are positive integers and $D$ is a subset of $(\F_{q^{s_1}}\times\F_{q^{s_2}})\backslash\{(0,0)\}$. Furthermore, 
there have been extensive researches on the parameters and the weight distributions of linear codes constructed using the above two construction methods,  leading to fruitful results (see \cite{DJ16,DD14,DD15,DLN08,DN07,JLF19,LYF18,LL21,LL22,LL24,S22,TXF17} and the references therein).

 Let $f(x)$ be a quadratic form function from $\F_{q^{s_1}}$ to $\F_q$ and $\alpha\in\F_q^*,\beta\in\F_q$, we define $\overline{D}_\beta$ and $D$ as follows:
\begin{align}\label{set:D1}
 &\overline{D}_{\beta}=\Big\{x\in \F_{q^{s_1}}|f(x)=\beta\Big\},\\
&D=\Big\{(x,y)\in (\F_{q^{s_1}}\times\F_{q^{s_2}})\backslash\{(0,0)\}: f(x)+\Tr_{q}^{q^{s_2}}(\alpha y)=\beta\Big\}. 
\end{align}
In the case $\beta=0$, the complete weight enumerator of $C_{\overline{D}_{\beta}\setminus \{(0,0)\}}$ defined in \eqref{defcode0} was determined by Zhang and Fan et al. \cite{ZFPT17},
while the weight hierarchies of $C_{\overline{D}_{\beta}\setminus \{(0,0)\}}$ was completely determined by Wan \cite{WZ94}, Wan and Wu \cite{WW97}, which were deduced by application of finite projective geometry. When $\beta \neq 0$, the complete weight enumerator of $C_{\overline{D}_{\beta}}$ in \eqref{defcode0} was determined by Du and Wan \cite{DW17}, while
the weight hierarchy of $\F_p$-linear codes $C_{\overline{D}_{\beta}}$ was completely determined by Li \cite{LF21}, Li and Li \cite{LL20-0} firstly using a different method and the weight hierarchy of $\F_{2^s}$-linear codes $C_{\overline{D}_{\beta}}$ was completely determined by Liu and Zheng et al. \cite{LZW23}. Notice that the results in \cite{LF21,LL20-0} can be directly extended to the case of $\F_q$-linear code $C_{\overline{D}_{\beta}}$. 

In the case $\alpha \neq 0, \beta=0$, the weight distribution and the weight hierarchy of $\F_p$-linear code $C_{D}$ defined in \eqref{defcode1} was determined by Li and Li \cite{LL22} and Hu et al. \cite{HLL24}. In this paper, for the case of clarity, we consider the case $\alpha(\neq 0), \beta\in \F_q$ and study the complete weight enumerators and the weight hierarchies of $\F_q$-linear codes $C_{D}$ in \eqref{defcode1}.

In Section 2, we set the main notations and give some
properties of $q$-ary quadratic forms. In Section 3, we determine the parameters of the present linear codes and examine their explicit weight enumerators. In Section 4, we determine their weight hierarchies completely. Finally, Section 5 concludes the paper.

\section{preliminaries}

\subsection{Some notations fixed throughout this paper}

  For convenience, we fix the following notations. 
\begin{itemize}
\item $p$ is an odd prime number and $q=p^m$.
\item $s_1,s_2$ are positive integers and $s=s_1+s_2$.
\item $q_i= q^{s_i}, i=1,2, \mathbb{F}=\mathbb{F}_{q_1}\times\mathbb{F}_{q_2},\mathbb{F}^\star=\mathbb{F}\Big\backslash\{(0,0)\}$.
\item $\zeta_{p}=\exp(\frac{2\pi i}{p})$ is a primitive $p$-th root of unity. 
\item $p^*=(-1)^{\frac{p-1}{2}}p$. 
\item $\eta$ is the quadratic character of $\mathbb{F}_{q}^{\ast}$ with letting $\eta(0)=0$.
\item $\upsilon(\cdot)$ is defined by the integer-values function of $\mathbb{F}_{q}$, $\upsilon(x)=q-1$ if $x=0$, otherwise $\upsilon(x)=-1$.
\item $\Big\langle\alpha_{1},\alpha_{2},\cdots,\alpha_{r}\Big\rangle$ is the $\mathbb{F}_q$-linear space spanned by elements $\alpha_{1},\alpha_{2},\cdots,\alpha_{r}$ in this paper.
\end{itemize}

\begin{lemma}[{\cite[Theorem 5.15 and 5.33]{LN97}}] With the symbols and notations above and let $f(x)=a_2x^2+a_1x+a_0\in\mathbb{F}_q[x]$, where $a_2\neq 0,q=p^m$. Then \\
\rm{(1)} $\sum\limits_{c\in\mathbb{F}_q^*}\eta(c)\zeta_p^{\mathrm{Tr}_p^q(c)}=(-1)^{m-1}\eta(-1)p^m(p^*)^{-\frac{m}{2}}$.\\
\rm{(2)} $\sum\limits_{c\in\mathbb{F}_q}\zeta_p^{\mathrm{Tr}_p^q(f(c))}=\zeta_p^{\mathrm{Tr}_p^q(a_0-a_1^2(4a_2)^{-1})}\eta(-a_2)(-1)^{m-1}p^m(p^*)^{-\frac{m}{2}}$.
\end{lemma}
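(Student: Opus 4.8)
The plan is to read part (1) as the standard evaluation of the quadratic Gauss sum and to derive part (2) from it by completing the square. For part (1), I would write $g=\sum_{c\in\F_q^{*}}\eta(c)\zeta_p^{\Tr_p^q(c)}$ for the Gauss sum of the quadratic character $\eta$. Over the prime field the classical evaluation gives $\sum_{c\in\F_p^{*}}\eta_0(c)\zeta_p^{c}=\sqrt{p^{*}}$, where $\eta_0$ is the quadratic character of $\F_p^{*}$ and $p^{*}=(-1)^{\frac{p-1}{2}}p$. Since $\eta=\eta_0\circ N$ is the lift of $\eta_0$ along the norm map $N$, the Davenport--Hasse relation yields $g=(-1)^{m-1}(\sqrt{p^{*}})^{m}=(-1)^{m-1}(p^{*})^{m/2}$. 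It then remains only to check that this equals the claimed closed form $(-1)^{m-1}\eta(-1)p^{m}(p^{*})^{-m/2}$, which reduces to the elementary identity $\eta(-1)=(-1)^{m(p-1)/2}$; one verifies this by separating the cases $p\equiv 1$ and $p\equiv 3\pmod 4$ and comparing $(p^{m}-1)/2$ with $m(p-1)/2$ modulo $2$.

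For part (2), I would complete the square, writing $f(x)=a_2\bigl(x+a_1(2a_2)^{-1}\bigr)^{2}+a_0-a_1^{2}(4a_2)^{-1}$. The constant term pulls out as the factor $\zeta_p^{\Tr_p^q(a_0-a_1^{2}(4a_2)^{-1})}$, and the substitution $u=x+a_1(2a_2)^{-1}$ turns the remaining sum into $\sum_{u\in\F_q}\zeta_p^{\Tr_p^q(a_2u^{2})}$. To evaluate the latter I would count preimages under squaring: the number of $u\in\F_q$ with $u^{2}=w$ equals $1+\eta(w)$ for every $w\in\F_q$, so $\sum_{u\in\F_q}\zeta_p^{\Tr_p^q(a_2u^{2})}=\sum_{w\in\F_q}(1+\eta(w))\zeta_p^{\Tr_p^q(a_2w)}$. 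The contribution of the $1$ is a nontrivial additive character sum over $\F_q$ (nontrivial because $a_2\neq 0$) and hence vanishes, while the $\eta$-contribution, after the substitution $w\mapsto a_2^{-1}w$ and the identity $\eta(a_2^{-1})=\eta(a_2)$, equals $\eta(a_2)g$. Substituting the value of $g$ from part (1) and using the multiplicativity relation $\eta(a_2)\eta(-1)=\eta(-a_2)$ produces exactly the stated expression.

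Essentially every step here is routine; the only places demanding care are the sign bookkeeping. The first is making sure that $\eta(-a_2)$ rather than $\eta(a_2)$ appears in (2): this is precisely where the factor $\eta(-1)$ carried by $g$ in part (1) is absorbed via multiplicativity. The second is the reconciliation of the two equivalent closed forms of $g$, that is, the passage between $(p^{*})^{m/2}$ and $\eta(-1)p^{m}(p^{*})^{-m/2}$; this is the most error-prone point, but it is a finite congruence check modulo $4$ and presents no real difficulty. I expect no genuine obstacle, since both identities are classical --- indeed they are the cited Theorems~5.15 and~5.33 of \cite{LN97}.
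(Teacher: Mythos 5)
Your proposal is correct: part (1) via the prime-field Gauss sum and the Davenport--Hasse relation, and part (2) via completing the square and the solution count $1+\eta(w)$ for $u^2=w$, with the sign reconciliation $\eta(-1)=(-1)^{m(p-1)/2}$ and the absorption $\eta(a_2)\eta(-1)=\eta(-a_2)$ both checking out. The paper itself gives no proof of this lemma --- it is imported verbatim as Theorems 5.15 and 5.33 of Lidl--Niederreiter --- and your argument is essentially the standard proof given in that reference, so there is nothing to compare beyond noting agreement.
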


\begin{lemma}\label{cor:1} For $b\in\F_{q}$, we have
\[\sum\limits_{z\in \mathbb{F}_{q}^{\ast}} \eta(z)^k\zeta_p^{\mathrm{Tr}_p^q(zb)}=\left\{\begin{array}{ll}
    		v(b), & k \textrm{\ is even}, \\
    	\eta(b)(-1)^{m-1}\eta(-1)p^m(p^*)^{-\frac{m}{2}},  & k \textrm{\ is odd}.
    	\end{array}
    	\right.\]
\end{lemma}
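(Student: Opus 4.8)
The plan is to reduce both cases to known character-sum evaluations, the key elementary observation being that $\eta$ takes values in $\{\pm 1\}$ on $\F_q^\ast$, so that $\eta(z)^k=1$ whenever $k$ is even and $\eta(z)^k=\eta(z)$ whenever $k$ is odd. This immediately splits the argument into the two stated cases, and in each case I would first dispose of the degenerate value $b=0$ before treating $b\neq 0$ by a change of variables.

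For the even case, the observation above turns the sum into the pure additive character sum $\sum_{z\in\F_q^\ast}\zeta_p^{\Tr_p^q(zb)}$. If $b=0$, every summand equals $1$ and the sum is $|\F_q^\ast|=q-1=v(0)$, as required. If $b\neq 0$, then $z\mapsto zb$ permutes $\F_q^\ast$, so the sum equals $\sum_{w\in\F_q^\ast}\zeta_p^{\Tr_p^q(w)}$; adding and subtracting the $w=0$ term and using the orthogonality of the nontrivial additive character $w\mapsto\zeta_p^{\Tr_p^q(w)}$ (whose sum over all of $\F_q$ vanishes) gives $0-1=-1=v(b)$.

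For the odd case, the sum becomes $\sum_{z\in\F_q^\ast}\eta(z)\zeta_p^{\Tr_p^q(zb)}$. When $b=0$ this is $\sum_{z\in\F_q^\ast}\eta(z)=0$, matching the right-hand side since $\eta(0)=0$ forces the stated expression to vanish. When $b\neq 0$, I would substitute $w=zb$; using multiplicativity of $\eta$ together with $\eta(b^{-1})=\eta(b)$ (valid because $\eta(b)=\pm 1$), the factor $\eta(b)$ pulls out and the sum becomes $\eta(b)\sum_{w\in\F_q^\ast}\eta(w)\zeta_p^{\Tr_p^q(w)}$. The remaining Gauss-type sum is precisely part (1) of the preceding lemma, whose value $(-1)^{m-1}\eta(-1)p^m(p^*)^{-\frac{m}{2}}$ yields exactly the claimed expression.

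The computation is essentially bookkeeping once these two reductions are in place, so there is no serious obstacle here. The only points demanding care are handling the degenerate cases $b=0$ consistently with the definitions of $v$ and of $\eta(0)=0$, and correctly invoking part (1) of the previous lemma, which supplies the single nontrivial ingredient, namely the evaluation of the quadratic Gauss sum.
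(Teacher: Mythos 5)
Your proof is correct, and it is exactly the argument the paper intends: the paper states this lemma without proof (as an immediate corollary of the preceding lemma, as its label suggests), relying on precisely your two reductions --- additive-character orthogonality for even $k$, and the substitution $w=zb$ plus the Gauss-sum evaluation of part (1) of the preceding lemma for odd $k$. Your handling of the $b=0$ cases and of $\eta(b^{-1})=\eta(b)$ is also correct, so nothing is missing.
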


\subsection{Quadratic form}

A quadratic form $f$ over $\E$ is a map $\E\to \F_q$ such that $f(ax)=a^2f(x)$ for all $a\in\F_q$ and \begin{equation*}\label{eq:F}
    F(x,y)=\frac{1}{2}\Big(f(x+y)-f(x)-f(y)\Big), \textrm{for any}\  x,y\in\E,
\end{equation*} is a symmetric bilinear form. 
Identifying $\mathbb{F}_{q_1}$ with an $s_1$-dimensional vector space $\F_q^{s_1}$ over $\F_q$, that is, taking a basis $\upsilon_{1},\upsilon_{2},\cdots,\upsilon_{s_1}$ of $\E$,
there is an $\F_q$-linear isomorphism $\E\simeq\F_q^{s_1}$ defined as
$$x=x_{1}\upsilon_{1}+x_{2}\upsilon_{2}+\cdots+x_{s_1}\upsilon_{s_1} \mapsto \overline{x}=(x_1,x_2,\cdots,x_{s_1}),$$
where $\overline{x}\in \F_q^{s_1}$ is called the coordinate vector of $x$ under the basis $v_1,v_2,\cdots,v_{s_1}$ of $\E$,
then $f$ can be represented by
\begin{align}\label{eq:f}
f(x)=f(\overline{x})=f(x_{1},x_{2},\cdots,x_{s_1})&=\sum_{1\leq i,j\leq s_1}F(v_i,v_j)x_{i}x_{j}
=\overline{x}A\overline{x}^T,
\end{align}
where $A=(F(v_i,v_j))_{s_1\times s_1}, F(v_i,v_j)\in \mathbb{F}_{q}, F(v_i,v_j)=F(v_j,v_i)$ and $\overline{x}^T$ is the transposition of $\overline{x}$. The rank $R$ of $f$ is defined as the codimension of $\F_q$-vector space $$\E^{\perp_f}=\{x\in\E|F(x,y)=0, \textrm{for all } y\in\E\}.$$ Then $|\E^{\perp_f}|=q^{s_1-R}$. We call $f$ non-degenerate if $R=s_1$ and degenerate, otherwise. It's known that there exists an invertible matrix $M$ over $\F_q$ such that
$$
MAM^T=\textrm{diag}(\lambda_{1},\lambda_{2},\cdots,\lambda_{R},0,\cdots,0)
$$
is a diagonal matrix, where $\lambda_{1},\lambda_{2},\cdots,\lambda_{R}\in\F_q^*$.
Let $\Delta_{f}=\lambda_{1}\lambda_{2}\cdots\lambda_{R}$ if $R\neq0$ and $\Delta_{f}=1$, otherwise. 
The sign $\varepsilon_{f}$ of $f$ is defined as $\eta(\Delta_{f})$.

For a subspace $H$ of $\E$, 
if $f$ is restricted to $H$,
it becomes a quadratic form denoted by $f|_{H}$ over $H$ in $r$ variables.
Let $R_{H}$ and $\varepsilon_{H}$ be the rank and the sign of $f|_{H}$ over $H$, respectively.

\begin{lemma}[{\cite[Proposition 1]{LF21}}]\label{lem:1}
Let $f$ be a quadratic form over $\E$ and $H$ be an $r$-dimensional ($r>0$) subspace of $\E$. Then for $c\in \mathbb{F}_{q}$ the number of solutions of the equation $f(x)=c$ in $H$ is 
$$
\Big|H\cap \overline{D}_{c}\Big|=\left\{\begin{array}{ll}
q^{r-1}+\upsilon(c)\eta((-1)^{\frac{R_H}{2}})\varepsilon_{H}q^{r-\frac{R_H+2}{2}},  &\textrm{if $R_H$ is even}, \\
q^{r-1}+\eta((-1)^{\frac{R_H-1}{2}}c)\varepsilon_{H}q^{r-\frac{R_H+1}{2}},  &\textrm{if $R_H$ is odd}.
\end{array}
\right.
$$
\end{lemma}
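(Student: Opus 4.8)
The plan is to count $N := \big|H\cap\overline{D}_c\big|$ by the standard additive-character method and then reduce everything to the quadratic Gauss sums already recorded in this section. First I would use the orthogonality of the additive characters of $\F_q$ to write
\begin{equation*}
N=\sum_{x\in H}\frac{1}{q}\sum_{t\in\F_q}\zeta_p^{\Tr_p^q(t(f(x)-c))}=q^{r-1}+\frac{1}{q}\sum_{t\in\F_q^*}\zeta_p^{-\Tr_p^q(tc)}\sum_{x\in H}\zeta_p^{\Tr_p^q(tf(x))},
\end{equation*}
so that the term $t=0$ isolates the main term $q^{r-1}$, and it remains to handle the inner sum $S(t):=\sum_{x\in H}\zeta_p^{\Tr_p^q(tf(x))}$ for $t\neq0$.

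Next I would diagonalize $f|_H$. Choosing an $\F_q$-linear isomorphism $H\simeq\F_q^r$ under which $f|_H$ becomes $\lambda_1 y_1^2+\cdots+\lambda_{R_H}y_{R_H}^2$ with $\lambda_1\cdots\lambda_{R_H}=\Delta_{f|_H}$, the sum $S(t)$ factorizes over the $r$ new coordinates: the $r-R_H$ variables on which $f|_H$ does not depend each contribute a factor $q$, while each of the remaining $R_H$ factors is a one-variable quadratic Gauss sum $\sum_{y\in\F_q}\zeta_p^{\Tr_p^q(t\lambda_i y^2)}$. Evaluating each such factor by the first lemma of this section (with $a_2=t\lambda_i$, $a_1=a_0=0$) and writing $\tau:=(-1)^{m-1}\eta(-1)p^m(p^*)^{-\frac m2}$, I obtain
\begin{equation*}
S(t)=q^{r-R_H}\,\varepsilon_H\,\eta(t)^{R_H}\,\tau^{R_H}.
\end{equation*}
Substituting this back leaves a single sum over $t\in\F_q^*$ of the shape $\sum_t\eta(t)^{R_H}\zeta_p^{-\Tr_p^q(tc)}$, which is exactly what Lemma~\ref{cor:1} evaluates, the two cases $R_H$ even and $R_H$ odd producing $\upsilon(c)$ and $\eta(-c)\tau$ respectively.

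Finally I would collect the constants. The one identity doing all the work is
\begin{equation*}
\tau^2=\eta(-1)q,
\end{equation*}
which follows from $p^*=(-1)^{\frac{p-1}{2}}p$ together with $\eta(-1)=(-1)^{\frac{q-1}{2}}=(-1)^{m\frac{p-1}{2}}$; using it to rewrite $\tau^{R_H}$ (resp. $\tau^{R_H+1}$) as a power of $q$ times a power of $\eta(-1)$ turns the even case into $\upsilon(c)\,\eta((-1)^{R_H/2})\,\varepsilon_H\,q^{r-\frac{R_H+2}{2}}$ and the odd case into $\eta((-1)^{\frac{R_H-1}{2}}c)\,\varepsilon_H\,q^{r-\frac{R_H+1}{2}}$, as claimed. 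I expect the main obstacle to be purely bookkeeping: tracking the quadratic-character factors through the diagonalization (the shift from $\eta(t\lambda_i)$ to $\eta(-t\lambda_i)$ and the absorption of the $\eta(-1)$ factors into $\tau$) and reconciling the various $(-1)$-exponents in the odd case, where one must observe that $(-1)^{\frac{R_H+3}{2}}=(-1)^{\frac{R_H-1}{2}}$. The one conceptual point worth stating carefully is that the diagonalizing change of variables is a bijection of $H$, so $S(t)$ genuinely splits as a product over the $r$ coordinates.
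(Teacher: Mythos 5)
Your proof is correct, and it is complete down to the constants. One thing you could not have known: the paper offers no proof of this lemma at all --- it is imported verbatim as \cite[Proposition 1]{LF21} --- so the only internal material to compare against is the pair of preliminary lemmas (the Gauss-sum evaluations quoted from \cite[Theorems 5.15 and 5.33]{LN97} and Lemma~\ref{cor:1}), which are stated precisely so that computations of this kind can be carried out. Your argument uses exactly that toolkit in the standard way: orthogonality of additive characters isolates the main term $q^{r-1}$; diagonalizing $f|_H$ splits the inner sum into $r-R_H$ trivial factors and $R_H$ one-variable quadratic Gauss sums, each equal to $\eta(t\lambda_i)\tau$ (the $\eta(-1)$ coming from the factor $\eta(-a_2)$ in the quoted evaluation cancels against the $\eta(-1)$ in your $\tau$), giving $S(t)=q^{r-R_H}\varepsilon_H\eta(t)^{R_H}\tau^{R_H}$; Lemma~\ref{cor:1} with $b=-c$ then yields $\upsilon(c)$ or $\eta(-c)\tau$; and your identity $\tau^{2}=\eta(-1)q$ together with the parity reduction $\eta(-1)^{\frac{R_H+3}{2}}=\eta(-1)^{\frac{R_H-1}{2}}$ produces exactly the two displayed cases. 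I checked the exponent arithmetic ($r-R_H-1+\frac{R_H}{2}=r-\frac{R_H+2}{2}$ and $r-R_H-1+\frac{R_H+1}{2}=r-\frac{R_H+1}{2}$); both are right, and your formula also degenerates correctly when $R_H=0$, where the empty product gives $S(t)=q^{r}$ and the count becomes $q^{r}$ or $0$ according as $c=0$ or not. A slightly different route, had you wanted one, is to split off the radical of $f|_H$ and invoke the classical solution counts for nondegenerate quadratic forms (\cite[Theorems 6.26 and 6.27]{LN97}); that avoids redoing the Gauss-sum computation but is less self-contained, and your character-sum argument is the one that fits the machinery this paper actually sets up.
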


Let $f$ be a quadratic form over $\E$. For any $x,y \in\E$, there exists a linearized polynomial $L_f$ over $\E$ such that $f(x)=\Tr_q^{q_1}(xL_f(x))$ and
\begin{equation}\label{eq:F} F(x,y)=\Tr_q^{q_1}\Big(xL_{f}(y)\Big)=\Tr_q^{q_1}\Big(yL_{f}(x)\Big). \end{equation}
Let $\mathrm{Im}(L_{f})=\Big\{L_{f}(x):x\in\E\Big\},\ \Ker(L_{f})=\Big\{x\in\E:L_{f}(x)=0\Big\}$ denote the image and the kernel of $L_{f}$, respectively.
If $b\in \mathrm{Im}(L_{f})$, we denote $x_{b}\in \E$ with $L_{f}(x_{b})=-\frac{b}{2}$. For more details, one can refer to \cite[Sec.IV]{HK06} and \cite[Sec.2]{CM23}.

From eq. \eqref{eq:F}, we have
\begin{equation*}
\Ker(L_{f})=\{x\in \E:f(x+y)=f(x)+f(y), \textrm{for any}\  y\in\E\}=\mathbb{F}_q^{\perp_f}
\end{equation*}
and $\textrm {rank} (L_f) = R$.

By a similar proof of Lemma 5 in \cite{TXF17}, we have. 

\begin{lemma}[{\cite[Lemma 5]{TXF17}}]\label{lem:6}
 Let the symbols and notation be as above. Let $f$ be defined in \eqref{eq:f} and $H$ be an $r$-dimensional subspace of \  $ \E$. Then for $b\in \E$, we have the following.\\
 \rm{(1)} $\sum\limits_{x\in H}\zeta_p^{\mathrm{Tr}_p^q(f(x))}=\left\{
\begin{array}{ll}
			\varepsilon_H q^{r}(p^*)^{-\frac{mR_H}{2}}, & 2|R_H,\\
			(-1)^{m-1}\eta(-1)\varepsilon_H q^{r}(p^*)^{-\frac{mR_H}{2}}, & 2\not|R_H.
		\end{array} \right.$\\
 \rm{(2)} For any $z\in \F_q^*$,\\
 \rm{(2.1)} if $b\notin \mathrm{Im}(L_f)$, then we have
$$\sum\limits_{x\in \mathbb{F}_{q_1}}\zeta_p^{\mathrm{Tr}_p^q(zf(x)-\mathrm{Tr}_q^{q_1}(zbx))}=0.$$
 \rm{(2.2)} if $b\in \mathrm{Im}(L_f)$, then we have
$$\sum\limits_{x\in \mathbb{F}_{q_1}}\zeta_p^{\mathrm{Tr}_p^q(zf(x)-\mathrm{Tr}_q^{q_1}(zbx))}=\left\{
\begin{array}{ll}
		\varepsilon_fq^{s_1}(p^*)^{-\frac{mR}{2}}\zeta_p^{-\mathrm{Tr}_p^q(zf(x_b))}, & 2|R,\\
		(-1)^{m-1}\eta(-1)\varepsilon_fq^{s_1}(p^*)^{-\frac{mR}{2}} \eta(z)\zeta_p^{-\mathrm{Tr}_p^q(zf(x_b))}, &2\not|R,
		\end{array} \right.$$
where $x_{b}$ satisfies $L_{f}(x_{b})=-\frac{b}{2}$.
\end{lemma}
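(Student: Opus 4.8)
The plan is to establish part (1) by diagonalization and then deduce both cases of part (2) from it, via a coset argument for (2.1) and completing the square for (2.2).

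For part (1), since $f|_H$ is a quadratic form of rank $R_H$ on the $r$-dimensional $\F_q$-space $H$, I would choose an $\F_q$-linear change of coordinates on $H$ bringing it to the diagonal shape $\sum_{i=1}^{R_H}\lambda_i y_i^2$ with $\lambda_1,\dots,\lambda_{R_H}\in\F_q^*$ and $\prod_i\lambda_i=\Delta_{f|_H}$. As the substitution is a bijection of $H$, the sum factors as
\[
\sum_{x\in H}\zeta_p^{\Tr_p^q(f(x))}=q^{r-R_H}\prod_{i=1}^{R_H}\sum_{y\in\F_q}\zeta_p^{\Tr_p^q(\lambda_i y^2)},
\]
the factor $q^{r-R_H}$ coming from the variables that do not occur. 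Each inner sum is a quadratic Gauss sum evaluated by Lemma 2.1(2) (with $a_2=\lambda_i,\ a_1=a_0=0$) as $\eta(-\lambda_i)(-1)^{m-1}q(p^*)^{-m/2}$. Multiplying the $R_H$ factors and using $\prod_i\eta(-\lambda_i)=\eta(-1)^{R_H}\eta(\Delta_{f|_H})=\eta((-1)^{R_H})\varepsilon_H$ together with $(-1)^{(m-1)R_H}$, I obtain $\eta((-1)^{R_H})\varepsilon_H(-1)^{(m-1)R_H}q^r(p^*)^{-mR_H/2}$, which reduces to the two stated forms according to the parity of $R_H$.

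For part (2.1), I would use that $f$ is constant on cosets of $K=\Ker(L_f)=\F_q^{\perp_f}$: for $t\in K$ one has $f(t)=\Tr_q^{q_1}(tL_f(t))=0$ and $F(x,t)=\Tr_q^{q_1}(xL_f(t))=0$, so $f(x+t)=f(x)$. Hence the exponent $g(x)=zf(x)-\Tr_q^{q_1}(zbx)$ obeys $g(x+t)-g(x)=-\Tr_q^{q_1}(zbt)$, and grouping the sum over the cosets of $K$ factors off the inner character sum $\sum_{t\in K}\zeta_p^{-\Tr_p^{q_1}(zbt)}$ (using $\Tr_p^q\circ\Tr_q^{q_1}=\Tr_p^{q_1}$). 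This inner sum is $|K|$ or $0$ according as $t\mapsto\Tr_p^{q_1}(zbt)$ is trivial on $K$ or not, i.e. according as $zb\in K^{\perp}$ or not. Since $L_f$ is self-adjoint for the trace form $\Tr_q^{q_1}(xy)$, one has $\mathrm{Im}(L_f)=K^{\perp}$, and as $z\neq0$, $zb\in K^{\perp}$ is equivalent to $b\in\mathrm{Im}(L_f)$. Thus $b\notin\mathrm{Im}(L_f)$ forces the inner sum, and hence the whole sum, to vanish.

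For part (2.2), with $b\in\mathrm{Im}(L_f)$ I would pick $x_b$ satisfying $L_f(x_b)=-b/2$ and complete the square: from $F(x,x_b)=\Tr_q^{q_1}(xL_f(x_b))$ we get $f(x+x_b)=f(x)+f(x_b)-\Tr_q^{q_1}(bx)$, so $zf(x)-\Tr_q^{q_1}(zbx)=zf(x+x_b)-zf(x_b)$. After the substitution $x\mapsto x-x_b$ the sum equals $\zeta_p^{-\Tr_p^q(zf(x_b))}\sum_{x\in\E}\zeta_p^{\Tr_p^q(zf(x))}$, and the remaining sum is exactly part (1) applied to the quadratic form $zf$ on $H=\E$ with $r=s_1$. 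Since $zf$ has rank $R$ and $\Delta_{zf}=z^R\Delta_f$, we have $\varepsilon_{zf}=\eta(z)^R\varepsilon_f$, equal to $\varepsilon_f$ when $R$ is even and to $\eta(z)\varepsilon_f$ when $R$ is odd; inserting this yields the two displayed expressions, the extra $\eta(z)$ in the odd case arising precisely from $\varepsilon_{zf}$. The main obstacle throughout is the sign bookkeeping, namely ensuring that $\eta((-1)^{R_H})$ and $(-1)^{(m-1)R_H}$ collapse to the clean even/odd dichotomy in part (1) and that the factor $\eta(z)^R$ is propagated correctly in (2.2); the one genuinely structural ingredient is the identity $\mathrm{Im}(L_f)=\Ker(L_f)^{\perp}$, resting on the self-adjointness of $L_f$ for the trace form, which drives the dichotomy in (2.1).
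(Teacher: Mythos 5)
Your proof is correct and is essentially the intended one: the paper does not actually prove this lemma but cites \cite[Lemma 5]{TXF17} (``by a similar proof''), and that argument proceeds exactly as yours does --- diagonalization plus the Gauss-sum lemma for (1), vanishing over cosets of $\Ker(L_f)$ for (2.1), and completing the square at $x_b$ then applying (1) to $zf$ with $\varepsilon_{zf}=\eta(z)^{R}\varepsilon_f$ for (2.2). The two small points you gloss over --- that $\mathrm{Im}(L_f)=\Ker(L_f)^{\perp}$ needs a dimension count (via non-degeneracy of the trace form) in addition to self-adjointness, and that triviality of the $\mathbb{F}_p$-valued character on the $\F_q$-subspace $\Ker(L_f)$ is equivalent to orthogonality for the $\F_q$-bilinear trace form --- are routine and do not affect correctness.
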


\section{The complete weight enumerators}

In this section, we first calculate the length of $C_{D}$ defined in \eqref{defcode1} and the Hamming weight of non-zero codewords of $ C_{D}$.

\begin{lemma}\label{lem:length}
Let $D$ be defined as above and $C_{D}$ be defined in \eqref{defcode1}, define $n=|D|$. Then $$n =q^{s-1}-e,$$
where $e=1$,if $\beta=0$;$e=0$,if $\beta\neq0$. 
\end{lemma}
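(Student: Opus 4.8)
The plan is to count $|D|$ by first counting the number $N$ of \emph{all} pairs $(x,y)\in\F_{q_1}\times\F_{q_2}$ satisfying the defining equation $f(x)+\Tr_q^{q_2}(\alpha y)=\beta$, and then correcting for the single excluded point $(0,0)$. Since $f(0)=0$ and $\Tr_q^{q_2}(0)=0$, the pair $(0,0)$ satisfies the equation precisely when $\beta=0$; hence $|D|=N-e$ with $e=1$ if $\beta=0$ and $e=0$ otherwise. It therefore suffices to show $N=q^{s-1}$, and the claim about $e$ follows immediately.

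To evaluate $N$, I would detect the $\F_q$-valued condition $f(x)+\Tr_q^{q_2}(\alpha y)-\beta=0$ through the additive characters of $\F_q$, using the orthogonality relation $\frac{1}{q}\sum_{z\in\F_q}\zeta_p^{\Tr_p^q(zw)}=1$ if $w=0$ and $0$ otherwise. Substituting $w=f(x)+\Tr_q^{q_2}(\alpha y)-\beta$ and interchanging the order of summation factors $N$ into a product of an $x$-sum and a $y$-sum for each $z$:
$$N=\frac{1}{q}\sum_{z\in\F_q}\zeta_p^{-\Tr_p^q(z\beta)}\Big(\sum_{x\in\F_{q_1}}\zeta_p^{\Tr_p^q(zf(x))}\Big)\Big(\sum_{y\in\F_{q_2}}\zeta_p^{\Tr_p^q(z\Tr_q^{q_2}(\alpha y))}\Big).$$

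The decisive step is the inner $y$-sum. Using the transitivity of the trace, $\Tr_p^q(z\Tr_q^{q_2}(\alpha y))=\Tr_p^{q_2}(z\alpha y)$, so this sum equals $\sum_{y\in\F_{q_2}}\zeta_p^{\Tr_p^{q_2}(z\alpha y)}$. Because $\alpha\neq0$, for every $z\in\F_q^*$ the element $z\alpha$ is a nonzero element of $\F_{q_2}$, and the corresponding nontrivial additive character sum vanishes; thus only $z=0$ contributes, where the $y$-sum equals $q^{s_2}$ and the $x$-sum equals $q^{s_1}$. Hence $N=\frac{1}{q}\,q^{s_1}q^{s_2}=q^{s-1}$, which gives the statement. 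I expect no serious obstacle here: the only points requiring care are the trace transitivity identity and the observation that the hypothesis $\alpha\neq0$ \emph{by itself} forces $z=0$. In particular, the quadratic Gauss sum of Lemma~\ref{lem:6} never needs to be evaluated, and the quadratic form $f$ plays no role in determining the length.
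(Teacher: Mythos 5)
Your proof is correct, and it establishes the paper's key intermediate identity $N=q^{s-1}$ by a genuinely different route. The paper counts directly: it conditions on the value $c_1=f(x)$ and writes $N=\sum_{c_1\in\F_q}N(f(x)=c_1)\,N(\Tr_q^{q_2}(\alpha y)=\beta-c_1)$; since $\alpha\neq 0$ makes $y\mapsto\Tr_q^{q_2}(\alpha y)$ a surjective $\F_q$-linear map, every fiber has size $q_2/q$, so the second factor is constant and $N=\frac{q_2}{q}\sum_{c_1}N(f(x)=c_1)=\frac{q_2}{q}\cdot q_1=q^{s-1}$, with no character sums at all. You instead detect the defining equation with additive characters of $\F_q$ and observe that for every $z\neq 0$ the $y$-sum vanishes because $z\alpha\neq 0$, so only $z=0$ survives. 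These are two faces of the same structural fact---the linear term in $y$ is balanced, so neither $f$ nor $\beta$ can influence the total count---and your closing observation that the Gauss sums of Lemma~\ref{lem:6} are never needed matches the paper precisely (its citation of Lemma~\ref{lem:1} is likewise inessential to its own computation). Your handling of the excluded point $(0,0)$, giving $e=1$ exactly when $\beta=0$, is in fact more explicit than the paper's, which compresses that step into ``the desired conclusion can be obtained.'' The trade-off is minor: your argument invokes orthogonality machinery that the paper only brings in later (e.g.\ in the proof of Lemma~\ref{lem:N(u,v,t)}), while the paper's fiber count is slightly more elementary; both arguments are complete and short.
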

\begin{proof}
By Lemma \ref{lem:1}, the number of solutions of the equation $f(x)+\Tr_q^{q_2}(\alpha y)=\beta$ in $\E$ is 
\begin{align*}
N(f(x)+\Tr_q^{q_2}(\alpha y)=\beta)
&=\sum_{c_1\in \mathbb{F}_q} N(f(x)=c_1)N(\Tr_q^{q_2}(\alpha y)=\beta-c_1) \\
&=\frac{q_2}{q}\sum_{c_1\in \mathbb{F}_q} N(f(x)=c_1)  \\
&=q^{s-1}.
\end{align*}
Thus, the desired conclusion can be obtained.
\end{proof}

\begin{lemma}\label{lem:N(u,v,t)}
For any non-zero element $(u,v)\in \F$ and $t\in\F_q$,
put
$$N(u,v,t)=\Big\{(x,y)\in \F: f(x)+\Tr_q^{q_2}(\alpha y)=\beta, \Tr_q^{q_1}(ux)+\Tr_q^{q_2}(vy) = t\Big\}.$$ We have the following.
\begin{enumerate}
\item [\rm{(1)}] When $v\in \mathbb{F}_{q_2}\setminus \mathbb{F}_{q}^*\alpha$, we have $N(u,v,t)=q^{s-2}$.
\item [\rm{(2)}] When $v\in \mathbb{F}_{q}^{\ast}\alpha$, we have the following two cases.
 \item [\rm{(2.1)}] If $u\notin \textrm{Im}(L_f)$, then $N(u,v,t)=q^{s-2}$.
  \item [\rm{(2.2)}] If $u\in \textrm{Im}(L_f)$, then
$$
N(u,v,t)
=\left\{\begin{array}{ll}
    		q^{s-2}\Big(1+\varepsilon_f(p^*)^{-\frac{mR}{2}}\upsilon(\beta-zt+z^2f(x_u))\Big), & 2|R, \\
    	q^{s-2}\Big(1+\varepsilon_f(p^*)^{-\frac{m(R-1)}{2}}\eta(\beta-zt+z^2f(x_u))\Big),  & 2\not|R.
    	\end{array}
    	\right.
$$

\end{enumerate}
\end{lemma}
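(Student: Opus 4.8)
The plan is to fix $x\in\E$ and count the admissible $y\in\G$, reducing the two defining equations to conditions whose interaction is governed entirely by the relation between $v$ and $\alpha$. For a fixed $x$ the first equation reads $\Tr_q^{q_2}(\alpha y)=\beta-f(x)$ and the second reads $\Tr_q^{q_2}(vy)=t-\Tr_q^{q_1}(ux)$; both are $\F_q$-linear conditions on $y$, and via the nondegenerate pairing $(\lambda,y)\mapsto\Tr_q^{q_2}(\lambda y)$ their independence is controlled by that of $\alpha$ and $v$. I would first dispose of case (1). When $v\notin\F_q\alpha$ the two functionals $y\mapsto\Tr_q^{q_2}(\alpha y)$ and $y\mapsto\Tr_q^{q_2}(vy)$ are $\F_q$-linearly independent, so $y\mapsto(\Tr_q^{q_2}(\alpha y),\Tr_q^{q_2}(vy))$ is onto $\F_q^2$ and every $x$ admits exactly $q^{s_2-2}$ solutions $y$, giving $N(u,v,t)=q_1\cdot q^{s_2-2}=q^{s-2}$; the only remaining possibility in $\G\setminus\F_q^*\alpha$ is $v=0$, which forces $u\neq0$, whereupon the second equation becomes the condition $\Tr_q^{q_1}(ux)=t$ on $x$ alone, met by $q^{s_1-1}$ values of $x$, each contributing $q^{s_2-1}$ choices of $y$, so again $N(u,v,t)=q^{s-2}$.

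For case (2) write $\alpha=zv$ with a unique $z\in\F_q^*$. Then both conditions on $y$ prescribe the value of $\Tr_q^{q_2}(\alpha y)$, and they are compatible precisely when $x$ satisfies $f(x)-\Tr_q^{q_1}(zu\,x)=\beta-zt$; for each such $x$ there are $q^{s_2-1}$ admissible $y$ and for the others none. Hence $N(u,v,t)=q^{s_2-1}M$ with $M=\#\{x\in\E: f(x)-\Tr_q^{q_1}(zu\,x)=\beta-zt\}$. I would evaluate $M$ by the additive-character detection $M=\frac1q\sum_{w\in\F_q}\sum_{x\in\E}\zeta_p^{\Tr_p^q(w(f(x)-\Tr_q^{q_1}(zu\,x)-\beta+zt))}$; the term $w=0$ contributes $q^{s_1-1}$, and for $w\neq0$ the inner sum is exactly of the shape in Lemma \ref{lem:6}(2) with parameter $w$ and $b=zu$. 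Since $zu\in\Im(L_f)\iff u\in\Im(L_f)$, subcase (2.1) follows at once from Lemma \ref{lem:6}(2.1), which annihilates every $w\neq0$ term and leaves $N=q^{s_2-1}q^{s_1-1}=q^{s-2}$.

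Subcase (2.2) is where the real computation lies. Here $b=zu\in\Im(L_f)$, and one checks that $x_b=z x_u$ solves $L_f(x_b)=-b/2$, so $f(x_b)=z^2f(x_u)$ by homogeneity of the quadratic form. For even $R$, Lemma \ref{lem:6}(2.2) turns the inner sum into $\varepsilon_f q^{s_1}(p^*)^{-mR/2}\zeta_p^{-\Tr_p^q(w z^2 f(x_u))}$, and the remaining sum over $w\in\F_q^*$ collapses via $\sum_{w\neq0}\zeta_p^{\Tr_p^q(wB)}=\upsilon(B)$ with $B=zt-\beta-z^2f(x_u)$; using $\upsilon(-B)=\upsilon(B)$ and multiplying by $q^{s_2-1}$ yields the stated formula. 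For odd $R$ the same steps produce an extra factor $\eta(w)$ inside the $w$-sum, which I would evaluate with Lemma \ref{cor:1} (odd case) as $\sum_{w\neq0}\eta(w)\zeta_p^{\Tr_p^q(wB)}=\eta(B)(-1)^{m-1}\eta(-1)p^m(p^*)^{-m/2}$.

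The main obstacle is the bookkeeping in the odd-$R$ subcase. One must combine the two occurrences of $(-1)^{m-1}\eta(-1)$ (which square to $1$), absorb $p^m=q$ against the prefactor $1/q$, and collapse the product $(p^*)^{-mR/2}(p^*)^{-m/2}$ together with the leftover factor $q(p^*)^{-m}=(-1)^{m(p-1)/2}$ into $(p^*)^{-m(R-1)/2}$. Finally, since $\eta(B)=\eta(-1)\eta(\beta-zt+z^2f(x_u))$ and $(-1)^{m(p-1)/2}\eta(-1)=1$ — a parity identity I would verify by splitting on $p\bmod 4$ and $m\bmod 2$ — all signs cancel and the argument of $\eta$ becomes exactly $\beta-zt+z^2f(x_u)$; multiplying by $q^{s_2-1}$ then completes the proof.
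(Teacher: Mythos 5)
Your proof is correct and complete, but it reaches the lemma by a genuinely different reduction than the paper's. The paper runs orthogonality on both defining equations at once, writing $|N(u,v,t)|=q^{-2}\sum_{(x,y)\in\F}\sum_{z_1,z_2\in\F_q}\zeta_p^{\cdots}$, and then observes that the inner sum $\sum_{y\in\F_{q_2}}\zeta_p^{\Tr_p^{q_2}((z_1\alpha+z_2v)y)}$ vanishes unless $z_1\alpha+z_2v=0$: this kills all nontrivial terms in case (1) uniformly (including $v=0$, with no case split) and collapses the double sum to a single sum over $z_1\in\F_q^*$ in case (2), after which Lemma~\ref{lem:6} and Lemma~\ref{cor:1} are applied exactly as you do. You instead eliminate $y$ by linear algebra on the nondegenerate trace pairing: in case (1), independence of the functionals $\Tr_q^{q_2}(\alpha\,\cdot)$ and $\Tr_q^{q_2}(v\,\cdot)$ (plus your separate, and necessary, treatment of $v=0$) gives $N=q^{s-2}$ with no character sums at all; in case (2), compatibility of the two prescribed values of $\Tr_q^{q_2}(\alpha y)$ reduces everything to the univariate count $M=\#\{x\in\E:f(x)-\Tr_q^{q_1}(zux)=\beta-zt\}$, which you detect with a single character variable $w$ and the same two lemmas. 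The costs are comparable in case (2); your route makes case (1) transparent and cleanly isolates the quadratic-form computation, while the paper's single uniform expansion avoids any subcases. Your odd-$R$ bookkeeping is also right, and more explicit than the paper's (which simply cites the two lemmas): the leftover factor $q(p^*)^{-m}=(-1)^{m(p-1)/2}$ indeed cancels against the $\eta(-1)$ arising from $\eta(B)=\eta(-1)\eta(-B)$, precisely because $\eta(-1)=(-1)^{m(p-1)/2}$; together with $\upsilon(-B)=\upsilon(B)$ and $f(zx_u)=z^2f(x_u)$ (which both proofs use), this lands exactly on the stated formulas.
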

\begin{proof} 
By the orthogonal property of additive characters, we have
\begin{align}\label{eq:N(u,v)}
&\substack{|N(u,v,t)|=\frac{1}{q^{2}}\sum\limits_{(x,y)\in \mathbb{F}}\Big(\sum\limits_{z_{1}\in \mathbb{F}_{q}}\zeta_{p}^{\Tr_p^q(z_{1}(f(x)+\Tr_q^{q_2}(\alpha y)-\beta))}\sum\limits_{z_{2}\in \mathbb{F}_{q}}\zeta_{p}^{\Tr_p^q(z_{2}(\Tr_q^{q_1}(ux)+\Tr_q^{q_2}(vy)-t))}\Big)}  \\ \nonumber
&\substack{=\frac{1}{q^{2}}\sum\limits_{(x,y)\in \mathbb{F}}\Big(\big(1+\sum\limits_{z_{1}\in \mathbb{F}_{q}^*}\zeta_{p}^{\Tr_p^q(z_{1}(f(x)+\Tr_q^{q_2}(\alpha y)-\beta))}\big)\big(1+\sum\limits_{z_{2}\in \mathbb{F}_{q}^*}\zeta_{p}^{\Tr_p^q(z_{2}(\Tr_q^{q_1}(ux)+\Tr_q^{q_2}(vy)-t))}\big)\Big)}  \\ \nonumber
&\substack{=q^{s-2}+\frac{1}{q^{2}}\Big(\sum\limits_{z_{1}\in \mathbb{F}_{q}^{\ast}}\sum\limits_{(x,y)\in \mathbb{F}}\zeta_{p}^{\Tr_p^q(z_{1}(f(x)+\Tr_q^{q_2}(\alpha y)-\beta))}+ \sum\limits_{z_{2}\in \mathbb{F}_{q}^{\ast}}\sum\limits_{(x,y)\in \mathbb{F}}\zeta_{p}^{\Tr_p^q(z_{2}(\Tr_q^{q_1}(ux)+\Tr_q^{q_2}(vy)-t))}\Big)} \\ \nonumber
&\substack{+\frac{1}{q^{2}}\sum\limits_{z_{1}\in \mathbb{F}_{q}^{\ast}}\sum\limits_{z_{2}\in \mathbb{F}_{q}^{\ast}}\sum\limits_{(x,y)\in \mathbb{F}}\zeta_{p}^{\Tr_p^q(z_{1}(f(x)+\Tr_q^{q_2}(\alpha y)-\beta)+z_{2}(\Tr_q^{q_1}(ux)+\Tr_q^{q_2}(vy)-t))}}  \\  \nonumber
&\substack{=q^{s-2}+q^{-2}\sum\limits_{z_{1}\in \mathbb{F}_{q}^{\ast}}\zeta_p^{-\Tr_p^q(z_1\beta)}\sum\limits_{z_{2}\in \mathbb{F}_{q}^{\ast}}\zeta_p^{\Tr_p^q(-z_2t)}\sum\limits_{y\in \mathbb{F}_{q_2}}\zeta_{p}^{\Tr_p^{q_2}((z_{1}\alpha +z_{2}v)y)}\sum\limits_{x\in \mathbb{F}_{q_1}}\zeta_{p}^{\Tr_p^q(z_{1}f(x)+z_{2}\Tr_q^{q_1}(ux))}.}
\end{align}

(1) When $v\in \mathbb{F}_{q_2}\setminus \mathbb{F}_{q}^{\ast}\alpha$, then $\sum_{y\in \mathbb{F}_{q_2}}\zeta_{p}^{\Tr_p^{q_2}((z_{1}\alpha +z_{2}v)y)}=0$ and the desired conclusion is obtained.

(2) When $v\in \mathbb{F}_{q}^{\ast}\alpha$, i.e., $\alpha=zv$ for some $z\in \mathbb{F}_{q}^{\ast}$, \eqref{eq:N(u,v)} becomes
\begin{align}\label{eq:N(u,v):2}
|N(u,v,t)|
&=q^{s-2}+q^{s_2-2}\sum_{z_{1}\in \mathbb{F}_{q}^{\ast}}\zeta_p^{\Tr_p^q(z_1(-\beta+zt))}\sum_{x\in \mathbb{F}_{q_1}}\zeta_{p}^{\Tr_p^q(z_{1}f(x)-z_{1}\Tr_q^{q_1}(zux))}.
\end{align}

(2.1) If $u\notin \textrm{Im}(L_f)$, by Lemma~\ref{lem:6}, we have $|N(u,v,t)|=q^{s-2}$, the desired conclusion of (2.1) is obtained.

(2.2) If $u\in \textrm{Im}(L_f)$,
define $c=zu$, we have $x_{c}=zx_{u}$. By Lemma~\ref{lem:6} and Corollary~\ref{cor:1}, \eqref{eq:N(u,v):2} becomes
\begin{align}
&|N(u,v,t)|
=\left\{\begin{array}{ll}
    		q^{s-2}\Big(1+\varepsilon_f(p^*)^{-\frac{mR}{2}}\upsilon(\beta-zt+f(x_c))\Big), & 2|R, \\
    	q^{s-2}\Big(1+\varepsilon_f(p^*)^{-\frac{m(R-1)}{2}}\eta(\beta-zt+f(x_c))\Big),  & 2\not|R.
    	\end{array}
    	\right. \nonumber
\end{align}

\end{proof}

\begin{theorem}\label{thm:wd-e}  If $R$ is even, denote $\epsilon=\eta(-1)^{\frac{R}{2}}\varepsilon_f$,
then the code $C_{D}$ is a $[q^s-e,s]$ linear code over $\F_q$
with the weight distribution in Tables 1 and its complete weight enumerator is
\begin{align*}	
&\substack{CWE(C_D)=\omega_0^{q^{s-1}-e}+\Big(q^{s}-(q-1)q^{R}-1\Big)\omega_\rho^{q^{s-2}-e}\prod\limits_{\rho=1}^{q-1}\omega_\rho^{q^{s-2}}}\\		
&\substack{+q^{R-1}(q-1)(1+\upsilon(\beta)\epsilon q^{-\frac{R}{2}})\omega_0^{q^{s-2}\Big(1+(q-1)\epsilon q^{-\frac{R}{2}}\Big)-e}\prod\limits_{\rho=1}^{q-1} \omega_\rho^{q^{s-2}(1-\epsilon q^{-\frac{R}{2}})}}\\
&\substack{+q^{R-1}\omega_0^{q^{s-2}(1-\epsilon q^{-\frac{R}{2}})-e}\sum\limits_{h=1}^{q-1}\sum\limits_{l=1}^{q-1}\Big(1+\epsilon v(\omega_l-\beta)q^{-\frac{R}{2}}\Big)\prod\limits_{\rho=1}^{q-1} \omega_\rho^{q^{s-2}\Big(1+\epsilon v(\omega_l-\omega_h\omega_\rho) q^{-\frac{R}{2}}\Big)}}.
		\end{align*}
		\end{theorem}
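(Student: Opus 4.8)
The plan is to read off the complete weight enumerator one coordinate value at a time from Lemma~\ref{lem:N(u,v,t)}. Fix $(u,v)\in\mathbb{F}$ and consider the codeword $c_{(u,v)}=(\mathrm{Tr}_q^{q_1}(ux)+\mathrm{Tr}_q^{q_2}(vy))_{(x,y)\in D}$. Since $D=\overline{D}_{\beta}\setminus\{(0,0)\}$ deletes the point $(0,0)$ --- which lies in $\overline{D}_\beta$ and contributes the value $0$ exactly when $\beta=0$ --- the number of coordinates of $c_{(u,v)}$ equal to $\omega_\rho$ is $N(u,v,\omega_\rho)$ for $\rho\ge1$ and $N(u,v,0)-e$ for $\rho=0$. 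Hence $\omega[c_{(u,v)}]=\omega_0^{N(u,v,0)-e}\prod_{\rho=1}^{q-1}\omega_\rho^{N(u,v,\omega_\rho)}$, and $CWE(C_D)=\sum_{(u,v)\in\mathbb{F}}\omega[c_{(u,v)}]$. I would then organise this sum according to the three cases of Lemma~\ref{lem:N(u,v,t)}.

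The pair $(0,0)$ gives the all-zero codeword, contributing $\omega_0^{q^{s-1}-e}$. In Case~(1) ($v\in\mathbb{F}_{q_2}\setminus\mathbb{F}_q^*\alpha$) and Case~(2.1) ($v\in\mathbb{F}_q^*\alpha$ and $u\notin\mathrm{Im}(L_f)$) one has $N(u,v,t)=q^{s-2}$ for every $t$, so each such codeword contributes $\omega_0^{q^{s-2}-e}\prod_{\rho=1}^{q-1}\omega_\rho^{q^{s-2}}$. There are $q^s-(q-1)q^{s_1}-1$ such pairs in Case~(1) (all $v$ outside $\mathbb{F}_q^*\alpha$, minus $(0,0)$) and $(q-1)(q^{s_1}-q^{R})$ in Case~(2.1), using $|\mathrm{Im}(L_f)|=q^{R}$; these total $q^{s}-(q-1)q^{R}-1$, yielding the second term.

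The heart of the argument is Case~(2.2), where $v=\alpha/z$ with $z\in\mathbb{F}_q^*$ and $u\in\mathrm{Im}(L_f)$. First I would record, using the identity $(p^*)^{m}=\eta(-1)q$ (a consequence of the Gauss sum evaluations recalled in Section~2), that $\varepsilon_f(p^*)^{-mR/2}=\epsilon q^{-R/2}$; with $R$ even the formula of Lemma~\ref{lem:N(u,v,t)} then reads
\[ N(u,v,\omega_\rho)=q^{s-2}\bigl(1+\epsilon q^{-R/2}\,\upsilon(\beta+z^{2}f(x_u)-z\omega_\rho)\bigr). \]
Thus the codeword has a single ``heavy'' coordinate value $t^{\ast}=z^{-1}(\beta+z^{2}f(x_u))$, and its monomial depends only on $z$ and on $\omega_l:=\beta+z^{2}f(x_u)$. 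The decisive step is to count, for fixed $z$, the number of $u\in\mathrm{Im}(L_f)$ realising a prescribed value $f(x_u)=z^{-2}(\omega_l-\beta)$. I would verify that $u\mapsto f(x_u)$ is a nondegenerate quadratic form of rank $R$ and sign $\varepsilon_f$ on the $R$-dimensional space $\mathrm{Im}(L_f)$: diagonalising $f=\sum_{i=1}^{R}\lambda_i x_i^{2}$ gives $f(x_u)=\tfrac14\sum_{i=1}^{R}\lambda_i^{-1}u_i^{2}$, whose discriminant has sign $\eta(\Delta_f)=\varepsilon_f$. Applying Lemma~\ref{lem:1} then produces $q^{R-1}(1+\epsilon q^{-R/2}\upsilon(\omega_l-\beta))$ such $u$. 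Isolating this quadratic-form structure on $\mathrm{Im}(L_f)$ and pinning down its sign is the main obstacle; once it is in place the remainder is bookkeeping.

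Finally I would split Case~(2.2) by whether $t^{\ast}=0$. If $\omega_l=0$ then $t^{\ast}=0$, the heavy coordinate is $\omega_0$, and summing the count over the $q-1$ admissible values of $z$ gives coefficient $(q-1)q^{R-1}(1+\upsilon(\beta)\epsilon q^{-R/2})$ together with the monomial of the third term. If $\omega_l\neq0$ then $t^{\ast}\neq0$, the exponent of $\omega_0$ is the ``light'' value $q^{s-2}(1-\epsilon q^{-R/2})-e$, and writing $\omega_h=z$ one has $\upsilon(\beta+z^{2}f(x_u)-z\omega_\rho)=\upsilon(\omega_l-\omega_h\omega_\rho)$; summing $q^{R-1}(1+\epsilon q^{-R/2}\upsilon(\omega_l-\beta))$ over $\omega_h,\omega_l\in\mathbb{F}_q^*$ reproduces the double sum of the fourth term. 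As a final check, the four groups of codewords number $1+(q^{s}-(q-1)q^{R}-1)+(q-1)q^{R}=q^{s}$, which confirms $\dim C_D=s$, while the common total degree $q^{s-1}-e$ of every monomial recovers the length $n=q^{s-1}-e$ of Lemma~\ref{lem:length}.
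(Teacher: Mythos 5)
Your proposal is correct and takes essentially the same route as the paper's own proof: both read off the coordinate counts from Lemma~\ref{lem:N(u,v,t)}, split the pairs $(u,v)$ into the identical cases (zero pair; $v\notin\F_q^*\alpha$ or $u\notin\mathrm{Im}(L_f)$; $v\in\F_q^*\alpha$, $u\in\mathrm{Im}(L_f)$ with $\beta+z^2f(x_u)$ zero or nonzero), and assemble the complete weight enumerator and Table 1 from the resulting monomials and multiplicities. The only variation is the key count $\#\{u\in\mathrm{Im}(L_f):f(x_u)=c\}$: you realize $u\mapsto f(x_u)$ as a nondegenerate rank-$R$ quadratic form of sign $\varepsilon_f$ on $\mathrm{Im}(L_f)$ and apply Lemma~\ref{lem:1} there, whereas the paper applies Lemma~\ref{lem:1} on all of $\E$ and divides by $|\Ker(L_f)|=q^{s_1-R}$ --- two equivalent computations yielding the same value $q^{R-1}\bigl(1+\upsilon(c)\epsilon q^{-\frac{R}{2}}\bigr)$.
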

\begin{table}
\centering
\caption{when $R$ is even}
\begin{tabular*}{10.5cm}{@{\extracolsep{\fill}}ll}
\hline
\textrm{Weight} $\gamma$ \qquad& \textrm{Multiplicity} $A_\gamma$   \\
\hline
0 & 1  \\
$q^{s-2}(q-1)$& $q^{s}-q^{R}(q-1)-1$  \\
$q^{s-2}(q-1)(1-\epsilon q^{-\frac{R}{2}})$ & $q^{R-1}(q-1)(1+\upsilon(\beta)\epsilon q^{-\frac{R}{2}})$  \\
$q^{s-2}\Big(q-1+\epsilon q^{-\frac{R}{2}}\Big)$ & $q^{R-1}(q-1)\Big(q-1-\upsilon(\beta)\epsilon q^{-\frac{R}{2}}\Big)$  \\
\hline
\end{tabular*}
\end{table}
\begin{proof} From Lemmas~\ref{lem:length} and \ref{lem:N(u,v,t)} we obtain the Hamming weight
\begin{align*}
  \mathrm{wt}(c_{(u,v)})&=q^{s-1}-N(u,v,0)\\ 
&=\left\{\begin{array}{ll}
    0,&\substack{u=v=0},\\
    q^{s-2}(q-1),& \substack{v\in \F_{q_2}\setminus\F_q^*\alpha \text{\ or}\ v\in \F_q^*\alpha \text{\ and}\ u\notin \Im{\ L_f}},\\
    	q^{s-2}(q-1)\Big(1-\epsilon q^{-\frac{R}{2}}\Big), &\substack{v\in \F_q^*\alpha,u\in\Im{\ L_f} \text{\ and}\ \beta+f(x_c)=0}, \\
    	q^{s-2}\Big(q-1+\epsilon q^{-\frac{R}{2}}\Big),  & \substack{v\in \F_q^*\alpha,u\in\Im{\ L_f} \text{\ and}\ \beta+f(x_c)\neq 0}.
    	\end{array}
    	\right.
\end{align*}
Thus,for any non-zero element $(u,v)\in\F$, we have $\wt(c_{(u,v)})>0$. So, the map: $\F\rightarrow C_{D}$ defined by $(u,v)\mapsto c_{(u,v)} $
is an isomorphism of linear spaces over $\F_q$. Hence, the dimension of the code $C_{D}$ in \eqref{defcode1} is equal to $s$. By Lemma \ref{lem:length}, we proved that the code $C_{D}$ is a $[p^{s-1}-e,s]$ linear code over $ \mathbb{F}_{q} $.

Define $$\gamma_1=q^{s-2}(q-1),\gamma_2=q^{s-2}(q-1)\Big(1-\epsilon q^{-\frac{R}{2}}\Big),\gamma_3=	q^{s-2}\Big(q-1+\epsilon q^{-\frac{R}{2}}\Big),$$
then the multiplicities $A_{\gamma_i}$ of codewords with weight $\gamma_i$ in $C_D$ is 
\begin{align*}\label{eq:1}
  &A_{\gamma_{1}} = \Big|\Big\{(u,v)\in\F|\wt(c_{(u,v)}) = \gamma_1\Big\}\Big|  \\
  &= \Big|\Big\{(u,v)\in\F|u\in\F_{q_1},v\in\F_{q_2}\setminus\F_q^*\alpha\Big\}\Big|+\Big|\Big\{(u,v)\in\F|u\notin\textrm{Im}(L_f),v\in\F_q^*\alpha\Big\}\Big|\\
&=\Big(q_1(q_2-(q-1))-1\Big)+(q_1-q^{R})(q-1)\\
&=q^{s}-q^{R}(q-1)-1,
\end{align*}
\begin{align*}
&A_{\gamma_{2}} = \Big|\Big\{(u,v)\in\F|\wt(c_{(u,v)}) = \gamma_2\Big\}\Big|  \\
  &= \Big|\Big\{(u,v)\in\F|v\in \F_q^*\alpha \text{\ and}\ \beta+f(x_c)=0,c=zu,z\in\F_q^*\Big\}\Big|\\
&= \Big|\Big\{(u,v)\in\F|v\in \F_q^*\alpha \text{\ and}\ f(x_u)=-\frac{\beta}{z^2},z\in\F_q^*\Big\}\Big|\\
&=(q-1)\frac{q^{s_1-1}+\upsilon(\beta)\epsilon q^{s_1-\frac{R+2}{2}}}{q^{s_1-R}}\\
&=q^{R-1}(q-1)(1+\upsilon(\beta)\epsilon q^{-\frac{R}{2}}),
\end{align*}
\begin{align*}
&A_{\gamma_{3}} = \Big|\Big\{(u,v)\in\F|\wt(c_{(u,v)}) = \gamma_3\Big\}\Big|  \\
  &= \Big|\Big\{(u,v)\in\F|v\in \F_q^*\alpha \text{\ and}\ \beta+f(x_c)\neq 0,c=zu,z\in\F_q^*\Big\}\Big|\\
&= \Big|\Big\{(u,v)\in\F|v\in \F_q^*\alpha \text{\ and}\ f(x_u)\neq -\frac{\beta}{z^2},z\in\F_q^*\Big\}\Big|\\
&=(q-1)\frac{q^{s_1}-(q^{s_1-1}+\upsilon(\beta)\epsilon q^{s_1-\frac{R+2}{2}})}{q^{s_1-R}}\\
&=q^{R-1}(q-1)(q-1-\upsilon(\beta)\epsilon q^{-\frac{R}{2}}).
\end{align*}

Let $\F_q=\{\omega_0,\omega_1,\cdots,\omega_{q-1}\}$. The complete weight enumerator $\omega[c_{(u,v)}]$ of the codeword $c_{(u,v)}$ is
\begin{align*}
    &\omega[c_{(u,v)}]=\omega_0^{N(u,v,0)-e}\omega_1^{N(u,v,\omega_1)}\cdots\omega_{q-1}^{N(u,v,\omega_{q-1})}\\
   &=\left\{\begin{array}{ll}
    \omega_0^{q^{s-1}-e},&\substack{u=v=0,}\\
    \omega_0^{q^{s-2}-e}\prod\limits_{\rho=1}^{q-1}\omega_\rho^{q^{s-2}},& \substack{v\in \F_{q_2}\setminus\F_q^*\alpha,\text{\ or}\ \alpha=zv, u\notin \Im{L_f},}\\
    \omega_0^{q^{s-2}\Big(1+\epsilon q^{-\frac{R}{2}}(q-1)\Big)-e}\prod\limits_{\rho=1}^{q-1} \omega_\rho^{q^{s-2}(1-\epsilon q^{-\frac{R}{2}})}, &\substack{\alpha=zv,\beta+z^2f(x_u)=0}, \\
    	\omega_0^{q^{s-2}(1-\epsilon q^{-\frac{R}{2}})-e}\prod\limits_{\rho=1}^{q-1} \omega_\rho^{q^{s-2}\Big(1+\epsilon v(\omega_l-z\omega_\rho) q^{-\frac{R}{2}}\Big)},  &\substack{\alpha=z v,\beta+z^2f(x_u)=\omega_l\neq 0.}
    	\end{array}
    	\right.
\end{align*}
Then the complete weight enumerator of the code $C_D$ 
can be desired.
\end{proof}

As special cases of Theorem~\ref{thm:wd-e}, we give the following three examples, which are verified by the Magma program.
\begin{example}
 Let $(q,s_{1},s_{2},\alpha,\beta)=(3,5,3,1,1)$,$f(x)=\Tr_{3}^{3^5}(2x^{10}+x^{2})$, we have $R=4$ and $\varepsilon_f =-1$. Then,the corresponding code $C_{D}$ has parameters $[2187,8,1377]$, the complete weight enumerator is
\begin{align*}&\omega_{0}^{2187}+6398\omega_{0}^{729}\omega_{1}^{729}\omega_{2}^{729}+60\omega_{0}^{567}\omega_{1}^{810}\omega_{2}^{810}+51\omega_{0}^{810}\omega_{1}^{567}\omega_{2}^{810}\\
    &+51\omega_{0}^{810}\omega_{1}^{810}\omega_{2}^{567}.
    \end{align*}
\end{example}

\begin{example}
 Let $(q,s_{1},s_{2},\alpha,\beta)=(3,2,4,1,1)$,$f(x)=\Tr_{3}^{3^2}(x^{2})$,by Corollary 1 in \cite{TXF17}, we have $R=2$ and $\varepsilon_f = 1$. Then, the corresponding code $C_{D}$ has parameters $[243,6,108]$ and the complete weight enumerator is 
\begin{align*}
  \omega_{0}^{243}+710\omega_{0}^{81}\omega_{1}^{81}\omega_{2}^{81}+4\omega_{0}^{135}\omega_{1}^{54}\omega_{2}^{54}+7\omega_{0}^{54}\omega_{1}^{135}\omega_{2}^{54}+7\omega_{0}^{54}\omega_{1}^{54}\omega_{2}^{135}.\end{align*}
\end{example}

\begin{example}
 Let $(q,s_{1},s_{2},\beta,\alpha)=(9,2,1,1+g,1)$,$f(x)=\mathrm Tr_{9}^{9^2}(x^{2})$, where $g$ is a primitive element of $\mathbb{F}_{9}$, then $R=2$ and $\epsilon=-1$. The corresponding code $C_{D}$ has parameters $[81,3,71]$ and the complete weight enumerator is $1+586z^{71}+80z^{72}+80z^{80}$.
\end{example}

\begin{theorem}\label{thm:wd-o}  
If $R$ is odd, let $\epsilon=\eta(-1)^{\frac{R-1}{2}}\varepsilon_f$.
Then the code $C_{D}$ is a $[q^{s-1}-e,s]$ linear code over $\F_q$
with the weight distribution in Tables \rm{2 \&  3} and its complete weight enumerator is
\begin{align*}
&\substack{CWE(C_D)=\omega_0^{q^{s-1}-e}+(q^{s}-(q-1)q^{R}-1)\omega_0^{q^{s-2}-e}\prod\limits_{\rho=1}^{q-1}\omega_\rho^{q^{s-2}}}\\
&\substack{+q^{R-1}\Big(1+\epsilon\eta(-\beta)q^{-\frac{R-1}{2}}\Big)\omega_0^{q^{s-2}-e}\sum\limits_{l=1}^{q-1}\prod\limits_{\rho=1}^{q-1}\omega_\rho^{q^{s-2}\Big(1+\epsilon\eta(-\omega_l \omega_t)q^{-\frac{R-1}{2}}\Big)}}\\			&\substack{+q^{R-1}\sum\limits_{l=1}^{q-1}\sum\limits_{\substack{\theta\in\F_q^*,\\\eta(\theta)=\epsilon}}\Big(1+\eta(1-\frac{\beta}{\theta})q^{-\frac{R-1}{2}}\Big)\omega_0^{q^{s-2}(1+ q^{-\frac{R-1}{2}})-e}\prod\limits_{\rho=1}^{q-1}\omega_\rho^{q^{s-2}\Big(1+\epsilon\eta(\theta-\omega_l \omega_\rho)q^{-\frac{R-1}{2}}\Big)}}\\		&\substack{+q^{R-1}\sum\limits_{l=1}^{q-1}\sum\limits_{\substack{\theta\in\F_q^*,\\\eta(\theta)=-\epsilon}}\Big(1-\eta(1-\frac{\beta}{\theta})q^{-\frac{R-1}{2}}\Big)\omega_0^{q^{s-2}(1- q^{-\frac{R-1}{2}})-e}\prod\limits_{\rho=1}^{q-1}\omega_\rho^{q^{s-2}\Big(1+\epsilon\eta(\theta-\omega_l \omega_\rho)q^{-\frac{R-1}{2}}\Big)}}.
		\end{align*}
\end{theorem}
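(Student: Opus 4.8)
The plan is to run exactly the machinery used in the proof of Theorem~\ref{thm:wd-e}, now feeding in the odd branch of Lemma~\ref{lem:N(u,v,t)}(2.2) and of Lemma~\ref{lem:1}. I would first record the Hamming weight as $\wt(c_{(u,v)})=q^{s-1}-N(u,v,0)$: the length is $q^{s-1}-e$ by Lemma~\ref{lem:length}, and $c_{(u,v)}$ has exactly $N(u,v,0)-e$ zero coordinates. The one preliminary simplification is that the Gauss-sum constant collapses: as $R-1$ is even and $\eta(-1)=\big((-1)^{\frac{p-1}{2}}\big)^m$, one gets $\varepsilon_f(p^*)^{-\frac{m(R-1)}{2}}=\epsilon\, q^{-\frac{R-1}{2}}$ with $\epsilon=\eta(-1)^{\frac{R-1}{2}}\varepsilon_f$, so that for $v=z^{-1}\alpha\in\F_q^{*}\alpha$ and $u\in\Im(L_f)$ the lemma reads $N(u,v,t)=q^{s-2}\big(1+\epsilon q^{-\frac{R-1}{2}}\eta(\beta-zt+z^2f(x_u))\big)$. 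Putting $t=0$ and writing $\theta=\beta+z^2f(x_u)$, the three nonzero weights $q^{s-2}(q-1)$ and $q^{s-2}(q-1\mp q^{-\frac{R-1}{2}})$ appear according as $\eta(\theta)$ equals $0$, $\epsilon$, or $-\epsilon$; since $q-1>q^{-\frac{R-1}{2}}$ every nonzero $(u,v)$ has positive weight, so $(u,v)\mapsto c_{(u,v)}$ is an $\F_q$-isomorphism and $\dim C_D=s$.

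The crux is the distribution of the (well-defined) quantity $f(x_u)$ as $u$ ranges over $\Im(L_f)$. I would observe that $u\mapsto x_u+\Ker(L_f)$ is a bijection from $\Im(L_f)$ onto $\E/\Ker(L_f)$, on which $f$ descends because $\Ker(L_f)=\E^{\perp_f}$ is the radical; hence $f(x_u)$ is distributed like the values of a non-degenerate quadratic form of rank $R$ and sign $\varepsilon_f$ on an $R$-dimensional space. The odd branch of Lemma~\ref{lem:1} (with $r=R_H=R$) then yields, for each $\psi\in\F_q$,
\[
\big|\{u\in\Im(L_f):f(x_u)=\psi\}\big|=q^{R-1}+\epsilon\,\eta(\psi)\,q^{\frac{R-1}{2}}.
\]
Because $\eta(z^2)=1$, the substitution $\psi=(\theta-\beta)/z^2$ converts this into $q^{R-1}+\epsilon\,\eta(\theta-\beta)\,q^{\frac{R-1}{2}}$, independent of $z$; this single count drives both the weight distribution and the complete weight enumerator.

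With this in hand, the weight distribution follows by summing the multiplicity over $z\in\F_q^{*}$ (equivalently over $v\in\F_q^{*}\alpha$) and over the fibres $\eta(\theta)=0,\epsilon,-\epsilon$, the only nonroutine inputs being the standard quadratic character sums $\sum_{\theta\in\F_q^{*}}\eta(\theta-\beta)=-\eta(-\beta)$ and $\sum_{\theta\in\F_q}\eta(\theta(\theta-\beta))$; note that the generic codewords ($N\equiv q^{s-2}$, of count $q^s-(q-1)q^R-1$) and the $\theta=0$ codewords both land in the weight $q^{s-2}(q-1)$. For the complete weight enumerator I would expand $\omega[C_D]=\sum_{(u,v)}\omega_0^{N(u,v,0)-e}\prod_{\rho=1}^{q-1}\omega_\rho^{N(u,v,\omega_\rho)}$, insert $N(u,v,\omega_\rho)=q^{s-2}(1+\epsilon q^{-\frac{R-1}{2}}\eta(\theta-\omega_l\omega_\rho))$ with $z=\omega_l$, and reorganize the $v\in\F_q^{*}\alpha,\,u\in\Im(L_f)$ part as a double sum over $l$ and over $\theta$, each $\theta$ weighted by the count above; splitting $\theta=0$, $\eta(\theta)=\epsilon$, $\eta(\theta)=-\epsilon$ reproduces the four grouped summands, the exponent of $\omega_0$ being $q^{s-2}(1+\epsilon\eta(\theta)q^{-\frac{R-1}{2}})-e$ in each case.

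The main obstacle I anticipate is bookkeeping rather than conceptual: matching the running parameters ($z$ versus $\omega_l$, and $\theta=\beta+z^2f(x_u)$ versus the arguments $\theta-\omega_l\omega_\rho$ inside each $\eta$) so that the per-$z$ factor $q^{R-1}(1+\epsilon\eta(\theta-\beta)q^{-\frac{R-1}{2}})$ becomes the displayed $1\pm\eta(1-\tfrac{\beta}{\theta})q^{-\frac{R-1}{2}}$ via $\epsilon\eta(\theta-\beta)=\epsilon\eta(\theta)\eta(1-\tfrac{\beta}{\theta})$ on the fibres $\eta(\theta)=\pm\epsilon$. Extra care is needed at the degenerate value $\theta=0$: unlike the even case, where $\upsilon(0)=q-1$ forces a separate weight, here $\eta(0)=0$ merges it into the weight $q^{s-2}(q-1)$, yet it carries a nontrivial complete weight enumerator and so must be retained as its own group in the enumerator.
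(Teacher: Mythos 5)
Your proposal is correct and takes essentially the same route as the paper's own proof: the paper likewise reads off $\wt(c_{(u,v)})=q^{s-1}-N(u,v,0)$ from Lemma~\ref{lem:N(u,v,t)}, counts the $u\in \mathrm{Im}(L_f)$ with $\beta+z^2f(x_u)=\theta$ by exactly your quotient/bijection count $\big(q^{s_1-1}+\epsilon\eta(\theta-\beta)q^{s_1-\frac{R+1}{2}}\big)\big/q^{s_1-R}=q^{R-1}\big(1+\epsilon\eta(\theta-\beta)q^{-\frac{R-1}{2}}\big)$, and assembles the complete weight enumerator from the per-codeword enumerators grouped by $\theta=0$, $\eta(\theta)=\epsilon$, $\eta(\theta)=-\epsilon$, keeping the $\theta=0$ class separate in the enumerator even though its Hamming weight merges with the generic one. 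The only cosmetic difference is in evaluating the fiber sums $\sum_{\eta(\theta)=\epsilon}\eta(\theta-\beta)$: the paper uses $I_2(a)=\sum_{x\in\F_q}\eta(a+x^2)=-1$ with a case split on $\epsilon\eta(-\beta)=\pm 1$, while you invoke the equivalent complete character-sum identities.
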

\begin{table}\label{tab:wd:o}
\centering
\caption{when $R$ is odd and $\beta = 0$}
\begin{tabular*}{9cm}{@{\extracolsep{\fill}}ll}
\hline
\textrm{Weight} $\omega$ & \textrm{Multiplicity} $A_\omega$   \\
\hline
$\substack{0}$ & $\substack{1}$  \\
				$\substack{q^{s-2}(q-1)}$ & $\substack{q^{s}-q^{R-1}(q-1)^2-1}$\\
		$\substack{q^{s-2}\Big(q-1-q^{-\frac{R-1}{2}}\Big)}$&$\substack{\frac{1}{2}q^{R-1}(q-1)^2(1+q^{-\frac{R-1}{2}})}$  \\	$\substack{q^{s-2}\Big(q-1+q^{-\frac{R-1}{2}}\Big)}$ & $\substack{\frac{1}{2}q^{R-1}(q-1)^2(1-q^{-\frac{R-1}{2}})}$  \\
\hline
\end{tabular*}
\end{table}
\begin{table}\label{tab:wd:o}
\centering
\caption{when $R$ is odd and $\beta \neq 0$}
\begin{tabular*}{9cm}{@{\extracolsep{\fill}}ll}
\hline
\textrm{Weight} $\omega$ & \textrm{Multiplicity} $A_\omega$   \\
\hline
$\substack{0}$ & $\substack{1}$  \\
				$\substack{q^{s-2}(q-1)}$ & $\substack{q^{s}-q^{R-1}(q-1)^2+\epsilon\eta(-\beta)q^{\frac{R-1}{2}}(q-1)-1}$\\
		$\substack{q^{s-2}\Big(q-1-q^{-\frac{R-1}{2}}\Big)}$&$\substack{\frac{1}{2}q^{R-1}(q-1)^2+\frac{1}{2}q^{\frac{R-1}{2}}(q-1)(-1-\epsilon\eta(-\beta))}$  \\	$\substack{q^{s-2}\Big(q-1+q^{-\frac{R-1}{2}}\Big)}$ & $\substack{\frac{1}{2}q^{R-1}(q-1)^2-\frac{1}{2}q^{\frac{R-1}{2}}(q-1)(-1+\epsilon\eta(-\beta))}$  \\
\hline
\end{tabular*}
\end{table}
\begin{proof} Similarly,from Lemmas~\ref{lem:length} and \ref{lem:N(u,v,t)} we obtain the Hamming weight
\begin{align*}
  &\mathrm{wt}(c_{(u,v)})=q^{s-1}-N(u,v,0)\\ 
&=\left\{\begin{array}{ll}
    0,&u=v=0,\\
    q^{s-2}(q-1),& v\in \F_{q_2}\setminus\F_q^*\alpha \text{\ or}\ v\in \F_q^*\alpha \text{\ and}\ u\notin \Im{\ L_f},\\
    & \text{\ or}\ v\in \F_q^*\alpha,u\in\Im{\ L_f},\beta+f(x_c)=0,\\
    	q^{s-2}\Big(q-1-q^{-\frac{R-1}{2}}\Big), &v\in \F_q^*\alpha,u\in\Im{\ L_f}, \eta(\beta+f(x_c))=\epsilon, \\
    	q^{s-2}\Big(q-1+q^{-\frac{R-1}{2}}\Big), &v\in \F_q^*\alpha,u\in\Im{\ L_f},\eta(\beta+f(x_c))=-\epsilon.
    	\end{array}
    	\right.
\end{align*}
Define $$\gamma_1=q^{s-2}(q-1),\gamma_2=	q^{s-2}\Big(q-1-q^{-\frac{R-1}{2}}\Big),\gamma_3=	q^{s-2}\Big(q-1+q^{-\frac{R-1}{2}}\Big),$$
then the multiplicity $A_{\gamma}$ of codewords with weight $\gamma_i$ in $C_D$ is 
\begin{align*}
  &A_{\gamma_{1}} = \Big|\Big\{(u,v)\in\F|\wt(c_{(u,v)}) = \gamma_1\Big\}\Big| \\
  &= \Big|\Big\{(u,v)\in\F|u\in\F_{q_1},v\in\F_{q_2}\setminus\F_q^*\alpha\Big\}\Big|\\
  &+\Big|\Big\{(u,v)\in\F|u\notin\textrm{Im}(L_f),\alpha = zv,z\in\F_q^*\Big\}\Big|\\
  &+\Big|\Big\{(u,v)\in\F|u\in\Im{\ L_f},\alpha = zv,z\in\F_q^*, \beta+z^2f(x_u)=0\Big\}\Big|\\
&=\Big(q_1(q_2-(q-1))-1\Big)+(q_1-q^{R})(q-1)\\
&+(q-1)\frac{q^{s_1-1}+\epsilon \eta(-\beta )q^{s_1-\frac{R+1}{2}}}{q^{s_1-R}}\\
&=q^{s}-q^{R-1}(q-1)^2+\epsilon\eta(-\beta)q^{\frac{R-1}{2}}(q-1)-1,
\end{align*}
\begin{align*}
&A_{\gamma_{2}} = \Big|\Big\{(u,v)\in\F|\wt(c_{(u,v)}) = \gamma_2\Big\}\Big|  \\
  &= \Big|\Big\{(u,v)\in\F|\alpha = zv,u\in\Im{\ L_f},\eta(\beta+z^2f(x_u))=\epsilon,z\in\F_q^*\Big\}\Big|\\
&=(q-1)\sum\limits_{\substack{\theta\in\F_q^*,\\\eta(\theta)=\epsilon}}\frac{q^{s_1-1}+\epsilon \eta(\theta-\beta)q^{s_1-\frac{R+1}{2}}}{q^{s_1-R}}\\
&=\frac{1}{2}q^{R-1}(q-1)^2+\epsilon q^{\frac{R-1}{2}}(q-1)\sum\limits_{\substack{\theta\in\F_q^*,\\\eta(\theta)=\epsilon}} \eta(\theta -\beta).\end{align*}

If $\beta = 0$, it's easy to see that $A_{\gamma_{2}}=\frac{1}{2}q^{R-1}(q-1)^2(1+q^{-\frac{R-1}{2}})$.

If $\beta \neq 0$, let $I_n(a)=\sum\limits_{x\in\F_q}\eta(a+x^n), a\in\F_q^*$. By Theorem 5.50 in \cite{LN97}, we know that $I_2(a)=-1$ for any $a\in\F_q^*$. Then when $\epsilon\eta(-\beta)=1$, we have 

\begin{align*}
&\sum\limits_{\substack{\theta\in\F_q^*,\\\eta(\theta)=\epsilon}} \eta(\theta -\beta)=\eta(-\beta)\sum\limits_{\substack{\theta\in\F_q^*,\\\eta(\theta)=\eta(-\beta)\epsilon}}\eta(1+\theta)=\frac{1}{2}\epsilon\sum\limits_{\theta\in\F_q^*}\eta(1+\theta^2)\\
&= \frac{1}{2}\epsilon\Big(\sum\limits_{\theta\in\F_q}\eta(1+\theta^2)-1\Big)= \frac{1}{2}\epsilon(I_2(1)-1)=-\epsilon.
\end{align*}

When $\epsilon\eta(-\beta)=-1$, we have
\begin{align*}
&\sum\limits_{\substack{\theta\in\F_q^*,\\\eta(\theta)=\epsilon}} \eta(\theta -\beta)=\eta(-\beta)\sum\limits_{\substack{\theta\in\F_q^*,\\ \eta(\theta)=\eta(-\beta)\epsilon}}\eta(1+\theta)=\frac{1}{2}\epsilon\sum\limits_{\theta\in\F_q^*}\eta(\gamma+\theta^2) \\
&= \frac{1}{2}\epsilon\Big(\sum\limits_{\theta\in\F_q}\eta(\gamma+\theta^2)+1\Big)= \frac{1}{2}\epsilon(I_2(\gamma)+1)=0,
\end{align*}
where $\gamma$ is some fixed non-square element in $\mathbb{F}_q^{*}$. From the above analysis, we have
\begin{align*}
&A_{\gamma_{2}}
=\frac{1}{2}q^{R-1}(q-1)^2+\epsilon\eta(-\beta) q^{\frac{R-1}{2}}(q-1)\sum\limits_{\substack{\theta\in\F_q^*,\\\eta(\theta)=\eta(-\beta)\epsilon}} \eta(1+\theta)\\
&=\frac{1}{2}q^{R-1}(q-1)^2+\frac{1}{2} q^{\frac{R-1}{2}}(q-1)(-1-\epsilon\eta(-\beta)),
\end{align*}

By similar calculations and analysis, we can obtain $$ A_{\gamma_{3}} =\left\{\begin{array}{ll} 
\frac{1}{2}q^{R-1}(q-1)^2(1-q^{-\frac{R-1}{2}}),& \textrm{if}\ \beta=0,\\
\frac{1}{2}q^{R-1}(q-1)^2-\frac{1}{2} q^{\frac{R-1}{2}}(q-1)(-1+\epsilon\eta(-\beta)), & \textrm{if}\ \beta \neq 0.
\end{array}
    	\right.$$


The complete weight enumerator $\omega[c_{(u,v)}]$ of the codeword $c_{(u,v)}$ is
\begin{align*}
    &\omega[c_{(u,v)}]=\omega_0^{N(u,v,0)-e}\omega_1^{N(u,v,\omega_1)}\cdots\omega_{q-1}^{N(u,v,\omega_{q-1})}\\
   & =\left\{\begin{array}{ll}
    \omega_0^{q^{s-1}-e},& u=v=0,\\
    \omega_0^{q^{s-2}-e}\prod\limits_{\rho=1}^{q-1}\omega_\rho^{q^{s-2}},& v\in \F_{q_2}\setminus\F_q^*\alpha,\text{\ or}\\& v\in \F_q^*\alpha, u\notin \Im{L_f},\\
    \omega_0^{q^{s-2}-e}\prod\limits_{\rho=1}^{q-1}\omega_\rho^{q^{s-2}\Big(1+\epsilon\eta(-z \omega_\rho)q^{-\frac{R-1}{2}}\Big)}, &\alpha=z v,u\in\Im{\ L_f},\\
    &\text{\ and}\ \beta+f(x_c)=0, \\
    	\omega_0^{q^{s-2}(1+ q^{-\frac{R-1}{2}})-e}\prod\limits_{\rho=1}^{q-1}\omega_\rho^{q^{s-2}\Big(1+\epsilon\eta(\theta-z \omega_\rho)q^{-\frac{R-1}{2}}\Big)},  & \beta+z^2f(x_u)=\theta,\\
    	&\eta(\theta)=\epsilon,\alpha=z v,\\
    		\omega_0^{q^{s-2}(1- q^{-\frac{R-1}{2}})-e}\prod\limits_{\rho=1}^{q-1}\omega_\rho^{q^{s-2}\Big(1+\epsilon\eta(\theta-z \omega_\rho)q^{-\frac{R-1}{2}}\Big)},  & \beta+z^2f(x_u)=\theta,\\
    	&\eta(\theta)=-\epsilon,\alpha=z v.
    	\end{array}
    	\right.
\end{align*}
Then the complete weight enumerator of the code $C_D$ can be obtained.
\end{proof}

As special cases of Theorem~\ref{thm:wd-o}, we give the following three examples, which are verified by Magma programs.

\begin{example}
 Let $(q,s_{1},s_{2},\beta,\alpha)=(5,3,2,1,1)$,$f(x)=\Tr_{5}^{5^3}(\theta x^{2})$, where $\theta$ is a primitive element of $\mathbb{F}_{5}^{3}$, by Corollary 1 in \cite{TXF17}, we have $R=3$ and $\varepsilon_{f}=-1$. Then,the corresponding code $C_{D}$ has parameters $[625,5,475]$, the complete weight enumerator is  
 \begin{align*}
&\omega_{0}^{625}+2624\omega_{0}^{125}\omega_{1}^{125}\omega_{2}^{125}\omega_{3}^{125}\omega_{4}^{125}+40\omega_{0}^{125}\omega_{1}^{150}\omega_{2}^{100}\omega_{3}^{100}\omega_{4}^{150}\\
&+40\omega_{0}^{125}\omega_{1}^{100}\omega_{2}^{150}\omega_{3}^{150}\omega_{4}^{100}+50\omega_{0}^{150}\omega_{1}^{125}\omega_{2}^{150}\omega_{3}^{100}\omega_{4}^{100}\\
&+50\omega_{0}^{150}\omega_{1}^{150}\omega_{2}^{100}\omega_{3}^{125}\omega_{4}^{100}+50\omega_{0}^{150}\omega_{1}^{100}\omega_{2}^{125}\omega_{3}^{100}\omega_{4}^{150}\\
&+50\omega_{0}^{150}\omega_{1}^{100}\omega_{2}^{100}\omega_{3}^{150}\omega_{4}^{125}+55\omega_{0}^{100}\omega_{1}^{150}\omega_{2}^{125}\omega_{3}^{150}\omega_{4}^{100}\\
&+55\omega_{0}^{100}\omega_{1}^{125}\omega_{2}^{100}\omega_{3}^{150}\omega_{4}^{150}+55\omega_{0}^{100}\omega_{1}^{150}\omega_{2}^{150}\omega_{3}^{100}\omega_{4}^{125}\\
&+55\omega_{0}^{100}\omega_{1}^{100}\omega_{2}^{150}\omega_{3}^{125}\omega_{4}^{150}.
\end{align*}

 \end{example}

\begin{example}
 Let $(q,s_{1},s_{2},\beta,\alpha)=(3,3,4,1,1)$,$f(x)=\Tr_{3}^{3^3}(\theta x^{2})$, where $\theta$ is a primitive element of $\mathbb{F}_{3}^{3}$, by Corollary 1 in \cite{TXF17}, we have $R=3$ and $\varepsilon_f = -1$. Then, the corresponding code $C_{D}$ has parameters $[729,7,405]$ and the complete weight enumerator is  \begin{align*}
  & \omega_{0}^{729}+2132\omega_{0}^{243}\omega_{1}^{243}\omega_{2}^{243}+6\omega_{0}^{243}\omega_{1}^{162}\omega_{2}^{324}+6\omega_{0}^{243}\omega_{1}^{324}\omega_{2}^{162}\\&+9\omega_{0}^{324}\omega_{1}^{243}\omega_{2}^{162}+9\omega_{0}^{324}\omega_{1}^{162}\omega_{2}^{243}+12\omega_{0}^{162}\omega_{1}^{324}\omega_{2}^{243}+12\omega_{0}^{162}\omega_{1}^{243}\omega_{2}^{324}. 
\end{align*}
\end{example}

\begin{example}
Let $(q,s_1,s_2,\beta,\alpha)=(3,4,1,0,1)$ and $f(x)=\mathrm{Tr}_3^{3^4}(x^{2})-\frac{1}{4}(\mathrm{Tr}_3^{3^4}(x))^{2}$. By Corollary 2 in \cite{TXF17}, we have
$\varepsilon_{f}=1$ and $R=3$. Then, the corresponding code $C_{D}$ has parameters $[80,5,45]$ and the weight enumerator
$1+24x^{45}+206x^{54}+12x^{63}$.
\end{example}

\begin{remark}
Let $\omega_{min}$ and $\omega_{max}$ be the minimum and maximum Hamming weights of nonzero codewords in $C_D$.
By Theorem~\ref{thm:wd-e} and Theorem \ref{thm:wd-o}, if $R\geq 3$, we always have 
\begin{center}
    $\cfrac{\omega_{min}}{\omega_{max}} > \cfrac{q-1}{q}$,
\end{center}
by the results in \cite{AB98} and \cite{YD06}, the codewords in $C_{D}$ are minimal codewords, so they can be employed to obtain secret sharing schemes.
\end{remark}

\section{The weight hierarchies of the presented linear codes}

In this section, we give the weight hierarchy of $C_{D}$ in \eqref{defcode1}.

For the linear codes $C_D$ defined in \eqref{defcode1}, a general formula may be employed to calculate the generalized Hamming weight $d_r(C_D)$. It is presented in the following proposition, which is the bivariate form of \cite[Proposition 2.1]{HLL24} and \cite[Theorem 1]{LF18}.

\begin{proposition}\label{pro:d_r}
For each $ r $ and $ 1\leq r \leq s$, if the dimension of $ C_{D} $
is $ s$, then \begin{align}
 d_{r}(C_{D})&=n-\max\Big\{|H_r^\perp\cap D|: H_r \in [\mathbb{F},r]_{q}\Big\}.  
       \end{align}
where $H^{\perp}=\{(x,y)\in \mathbb{F}:\Tr_{q}^{q^{s_1}}(ux)+\Tr_{q}^{q^{s_2}}(vy)=0,  (u,v)\in H \}$.
\end{proposition}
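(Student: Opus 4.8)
The plan is to establish the formula $d_r(C_D) = n - \max\{|H_r^\perp \cap D| : H_r \in [\mathbb{F},r]_q\}$ by exploiting the linear-space isomorphism $\Phi\colon \mathbb{F} \to C_D$, $(u,v) \mapsto c_{(u,v)}$, that was already proved to be an isomorphism in the course of Theorems~\ref{thm:wd-e} and~\ref{thm:wd-o}. Since $\dim C_D = s = \dim \mathbb{F}$, this bijection lets me transport every $r$-dimensional subspace problem from $C_D$ back to the parameter space $\mathbb{F}$. The definition gives $d_r(C_D) = \min\{|\Supp(H)| : H \in [C_D,r]_q\}$, so the whole argument reduces to computing $|\Supp(H)|$ for a generic $r$-dimensional $H \subseteq C_D$ in terms of the matching $r$-dimensional subspace $H_r = \Phi^{-1}(H)$ of $\mathbb{F}$.

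First I would fix coordinates. Writing $D = \{P_1,\dots,P_n\}$ with $P_j = (x_j,y_j)$, the $j$-th coordinate of the codeword $c_{(u,v)}$ is $\Tr_q^{q_1}(ux_j)+\Tr_q^{q_2}(vy_j)$. For a subspace $H = \Phi(H_r)$, the index $j$ lies \emph{outside} $\Supp(H)$ precisely when every codeword in $H$ vanishes in coordinate $j$, i.e. when $\Tr_q^{q_1}(ux_j)+\Tr_q^{q_2}(vy_j)=0$ for all $(u,v) \in H_r$. By the definition of $H_r^\perp$ recorded in the proposition statement, this says exactly $P_j \in H_r^\perp$. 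Hence the number of zero-columns of $H$ equals $|\{j : P_j \in H_r^\perp\}| = |H_r^\perp \cap D|$, and therefore
\begin{equation*}
|\Supp(H)| = n - |H_r^\perp \cap D|.
\end{equation*}
Taking the minimum over $H \in [C_D,r]_q$ corresponds, under the bijection $H \leftrightarrow H_r$, to taking the maximum of $|H_r^\perp \cap D|$ over $H_r \in [\mathbb{F},r]_q$, which yields the claimed identity once I note that $H \mapsto H_r^\perp$ runs over all $r$-dimensional subspaces of $\mathbb{F}$ as $H_r$ does (the annihilator map $H_r \mapsto H_r^\perp$ is a dimension-reversing bijection sending $r$-dimensional subspaces of $\mathbb{F}$ to $(s-r)$-dimensional ones, but here I only need that it is a bijection onto the full collection so that the max is unaffected).

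The one point demanding genuine care—and what I expect to be the main obstacle—is justifying the passage from a subspace $H$ of the \emph{code} to the right annihilator condition without circularity. Concretely, I must verify that the correspondence $c_{(u,v)} \leftrightarrow (u,v)$ is not merely a set bijection but respects the coordinate evaluation, so that ``all codewords in $H$ vanish at $j$'' translates cleanly into a \emph{linear} condition on the pairs $(u,v)$ spanning $H_r$; this is where the $\F_q$-linearity of the trace and of $\Phi$ is essential, and it is exactly the hypothesis that $\dim C_D = s$ (no collapse of parameters) that makes the bijection well-defined. Once linearity is in hand, the rest is bookkeeping. I would conclude by remarking that this is the bivariate analogue of \cite[Proposition 2.1]{HLL24} and \cite[Theorem 1]{LF18}, so the argument parallels those references with the single substitution of the bivariate support set $D \subseteq \mathbb{F}^\star$ and the bivariate orthogonality $\Tr_q^{q_1}(ux)+\Tr_q^{q_2}(vy)=0$ in place of their univariate versions.
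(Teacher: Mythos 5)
Your proof is correct. The paper itself offers no proof of this proposition --- it records it as the bivariate form of \cite[Proposition 2.1]{HLL24} and \cite[Theorem 1]{LF18} --- and your support-counting argument via the evaluation isomorphism $(u,v)\mapsto c_{(u,v)}$ (coordinate $j$ lies outside $\Supp(H)$ iff $(x_j,y_j)\in H_r^{\perp}$, hence $|\Supp(H)|=n-|H_r^{\perp}\cap D|$) is exactly the standard argument underlying those cited results, so in substance you have reconstructed the intended proof. One blemish: your closing parenthetical about the annihilator map $H_r\mapsto H_r^{\perp}$ is both unnecessary and misstated, since $H_r^{\perp}$ has dimension $s-r$ rather than $r$, and no property of the annihilator map is needed at all --- the maximum in the formula is indexed by $H_r\in[\mathbb{F},r]_q$ itself, and your bijection $H\leftrightarrow H_r=\Phi^{-1}(H)$ between $[C_D,r]_q$ and $[\mathbb{F},r]_q$ already makes the passage from the minimum of $|\Supp(H)|$ to $n$ minus the maximum of $|H_r^{\perp}\cap D|$ immediate; you can delete that remark without loss.
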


By Theorem~\ref{thm:wd-e} and Theorem~\ref{thm:wd-o}, we know that the dimension of the code $C_{D}$ defined in \eqref{defcode1} is $s$.
Let $H_r$ be an $r$-dimensional subspace of $\mathbb{F}$. On one hand, let $
N(H_r)=\Big\{(x,y)\in \mathbb{F}: f(x)+\mathrm{Tr}_q^{q_2}(\alpha y)=\beta, \mathrm{Tr}_{q}^{q_1}(ux)+\mathrm{Tr}_{q}^{q_2}(vy)=0, \forall (u,v)\in H_r \Big\}$.
Hence, by Lemma~\ref{lem:length} and Proposition~\ref{pro:d_r}, we have
\begin{equation}\label{eq:d_r:3}
     d_{r}(C_{\D})=q^{s-1}-\max\Big\{N(H_r)|: H_r \in [\mathbb{F},r]_{q}\Big\}.
\end{equation}

\begin{lemma} \label{lem:d_r:2}
Let $\alpha \in\F_{q_2}^*$ and $f$ be a quadratic form defined in \eqref{eq:f} with the sign $\varepsilon_f$ and the rank $R$. $H_r$ and $N(H_r)$ are defined as above. We have the following.
\begin{itemize}
  \item[(1)] If $\alpha\notin  \Prj_{2}(H_r)$, then $N(H_r)=q^{s-(r+1)}$.
  \item[(2)] If $\alpha\in  \Prj_{2}(H_r)$, then 
  \item[(2.1)] when $R$ is even, we have
  \[
    		|N(H_r)|=
    		q^{s-(r+1)}\big[1+\varepsilon_f(p^*)^{-\frac{mR}{2}}\sum\limits_{(u,-\alpha)\in H_{r}} \upsilon(\beta+f(x_u))\big].
    	\]
    \item[(2.2)] when $R$ is odd, we have
    \[
    		|N(H_r)|=
    	q^{s-(r+1)}\big[1+\varepsilon_f(p^*)^{-\frac{m(R-1)}{2}}\sum\limits_{(u,-\alpha)\in H_{r}}\eta(\beta+f(x_u))\big].
    	\]
\end{itemize}
Here $ \Prj_{2}$ is the second projection from $\mathbb{F}$ to $\mathbb{F}_{q_2}$ defined by $(x,y)\mapsto y$.
\end{lemma}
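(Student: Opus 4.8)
The plan is to evaluate $N(H_r)$ by the orthogonality method used in the proof of Lemma~\ref{lem:N(u,v,t)}, with the single linear constraint replaced by the $r$-dimensional family cutting out $H_r^{\perp}$. First I would write
$$N(H_r)=\frac{1}{q^{r+1}}\sum_{(x,y)\in\F}\sum_{z\in\F_q}\sum_{(u,v)\in H_r}\zeta_{p}^{\Tr_p^q\left(z(f(x)+\Tr_q^{q_2}(\alpha y)-\beta)+\Tr_q^{q_1}(ux)+\Tr_q^{q_2}(vy)\right)},$$
where the $z$-sum is the indicator of $f(x)+\Tr_q^{q_2}(\alpha y)=\beta$ and the sum over $H_r$ is the indicator of $(x,y)\in H_r^{\perp}$. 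The term $z=0$ forces $(u,v)=(0,0)$ after the $(x,y)$-sum and contributes exactly $q^{s-(r+1)}$, which is the ``$1$'' in the bracket. For $z\neq 0$ I would perform the $y$-sum first: $\sum_{y\in\F_{q_2}}\zeta_p^{\Tr_p^{q_2}((z\alpha+v)y)}$ equals $q_2$ when $v=-z\alpha$ and $0$ otherwise, so only the pairs $(u,-z\alpha)\in H_r$ survive.

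The projection dichotomy is then immediate. Since $\Prj_{2}(H_r)$ is an $\F_q$-subspace of $\F_{q_2}$ and $-z\in\F_q^{*}$, we have $-z\alpha\in\Prj_{2}(H_r)$ for some (equivalently every) $z\in\F_q^{*}$ if and only if $\alpha\in\Prj_{2}(H_r)$. In case (1), $\alpha\notin\Prj_{2}(H_r)$, no admissible pair $(u,-z\alpha)$ exists, the whole $z\neq 0$ part vanishes, and $N(H_r)=q^{s-(r+1)}$. In case (2), $\alpha\in\Prj_{2}(H_r)$ guarantees such pairs for every $z$, and I proceed.

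For case (2) the remaining inner sum $\sum_{x\in\E}\zeta_p^{\Tr_p^q(zf(x)+\Tr_q^{q_1}(ux))}$ is exactly the Gauss-type sum of Lemma~\ref{lem:6}(2) (with $b=-u/z$), so the terms with $u\notin\Im(L_f)$ drop out and only $u\in\Im(L_f)$ remain. The key manipulation is to reparametrize $u=zw$: because $H_r$ is a subspace, $(u,-z\alpha)=z(w,-\alpha)\in H_r$ iff $(w,-\alpha)\in H_r$, and $u\in\Im(L_f)$ iff $w\in\Im(L_f)$, so $(z,u)\mapsto(z,w)$ is a bijection onto $\F_q^{*}\times\{w:(w,-\alpha)\in H_r,\ w\in\Im(L_f)\}$. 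Under this substitution one checks $x_u=zx_w$ (hence $f(x_u)=z^{2}f(x_w)$), which makes the accumulated phase collapse to $\zeta_p^{-\Tr_p^q(z(\beta+f(x_w)))}$, now independent of the internal structure of $H_r$. Swapping the order of summation, the inner sum over $z\in\F_q^{*}$ becomes $\sum_{z}\eta(z)^{k}\zeta_p^{-\Tr_p^q(z(\beta+f(x_w)))}$ with $k$ even when $R$ is even and $k$ odd when $R$ is odd; evaluating it by Lemma~\ref{cor:1} produces $\upsilon(\beta+f(x_w))$ in the even case and $\eta(\beta+f(x_w))$ (times extra constants) in the odd case, and summing over the admissible $w$ yields the stated formulas.

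I expect the main obstacle to be the constant bookkeeping in case (2.2). There the $x$-sum in Lemma~\ref{lem:6}(2.2) already carries $(-1)^{m-1}\eta(-1)\varepsilon_f(p^*)^{-\frac{mR}{2}}\eta(z)$, and Lemma~\ref{cor:1} then contributes a further $(-1)^{m-1}\eta(-1)p^m(p^*)^{-\frac{m}{2}}$; these must be shown to collapse to the clean coefficient $\varepsilon_f(p^*)^{-\frac{m(R-1)}{2}}$. This reduces to the elementary identity $\eta(-1)p^m=(p^*)^m$, i.e.\ $(-1)^{(q-1)/2}=(-1)^{m(p-1)/2}$, which follows from $\frac{p^m-1}{2}\equiv m\cdot\frac{p-1}{2}\pmod 2$. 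Finally I would record the convention, already implicit in the statement, that the displayed sums run only over those $u$ with $(u,-\alpha)\in H_r$ and $u\in\Im(L_f)$, since $x_u$ is defined precisely for $u\in\Im(L_f)$ and the $x$-sum vanishes otherwise.
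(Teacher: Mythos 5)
Your proposal is correct and follows essentially the same route as the paper's own proof: orthogonality of additive characters, isolating the $z=0$ term as the contribution $q^{s-(r+1)}$, using the $y$-sum to force $v=-z\alpha$ (whence the projection dichotomy), rescaling by $z$ inside the subspace $H_r$ so the phase collapses to $\zeta_p^{-\Tr_p^q(z(\beta+f(x_w)))}$, and finishing with Lemma~\ref{lem:6} and Lemma~\ref{cor:1} together with the constant collapse $\eta(-1)p^m=(p^*)^m$. The only cosmetic difference is the order of operations (the paper rescales $(u,v)\mapsto(u/z,v/z)$ before performing the $y$-sum, while you do the $y$-sum first and then substitute $u=zw$), which does not change the argument.
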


\begin{proof}
By the orthogonal property of additive characters, we have
\begin{align*}
&\substack{q^{r+1}|N(H_r)|
    		=\sum\limits_{(x,y)\in \mathbb{F}}\sum\limits_{z\in \mathbb{F}_{q}}\zeta_p^{\mathrm{Tr}_{p}^{q}\big(zf(x)+\mathrm{Tr}_{q}^{q_2}(z\alpha y)-z\beta\big)}\sum\limits_{(u,v)\in H_r}\zeta_p^{\mathrm{Tr}_{p}^{q}\big(\mathrm{Tr}_q^{q_1}(ux)+\mathrm{Tr}_q^{q_2}(vy)\big)}}
    		\\
    		&=\substack{\sum\limits_{(x,y)\in \mathbb{F}}\sum\limits_{(u,v)\in H_r}\zeta_p^{\mathrm{Tr}_{p}^{q}\big(\mathrm{Tr}_q^{q_1}(ux)
    		+\mathrm{Tr}_q^{q_2}(vy)\big)}}\\
    		&\substack{+\sum\limits_{(x,y)\in \mathbb{F}}\sum\limits_{z\in \mathbb{F}_{q}^{\ast}}\zeta_p^{-\mathrm{Tr}_{p}^{q}(z\beta)}\sum\limits_{(u,v)\in H_r}\zeta_p^{\mathrm{Tr}_{p}^{q}\big(zf(x)+\mathrm{Tr}_q^{q_1}(ux)+\mathrm{Tr}_q^{q_2}(vy)+\mathrm{Tr}_{q}^{q_2}(z\alpha y)\big)}}\\
    		&\substack{=q^s+\sum\limits_{(x,y)\in \mathbb{F}}\sum\limits_{z\in \mathbb{F}_{q}^{\ast}}\zeta_p^{-\mathrm{Tr}_{p}^{q}(z\beta)}\sum\limits_{(u,v)\in H_r}\zeta_p^{\mathrm{Tr}_{p}^{q}\big(zf(x)+\mathrm{Tr}_q^{q_1}(ux)+\mathrm{Tr}_q^{q_2}(vy)+\mathrm{Tr}_{q}^{q_2}(z\alpha y)\big)}}\\
    		&\substack{=q^s+\sum\limits_{(u,v)\in H_r}\sum\limits_{z\in \mathbb{F}_{q}^{\ast}}\zeta_p^{-\mathrm{Tr}_{p}^{q}(z\beta)}\sum\limits_{x\in \mathbb{F}_{q_1}}\zeta_p^{\mathrm{Tr}_{p}^{q}\big(zf(x)+z\mathrm{Tr}_q^{q_1}(\frac{u}{z}x)\big)}\sum\limits_{y\in \mathbb{F}_{q_2}}\zeta_p^{\mathrm{Tr}_{p}^{q}\big(z(\mathrm{Tr}_q^{q_2}(\frac{v}{z}y)+\mathrm{Tr}_{q}^{q_2}(\alpha y))\big)}}
    		\\&\substack{=q^s+\sum\limits_{(u,v)\in H_r}\sum\limits_{z\in \mathbb{F}_{q}^{\ast}}\zeta_p^{-\mathrm{Tr}_{p}^{q}(z\beta)}\sum\limits_{x\in \mathbb{F}_{q_1}}\zeta_p^{\mathrm{Tr}_{p}^{q}\big(zf(x)+z\mathrm{Tr}_q^{q_1}(ux)\big)}\sum\limits_{y\in \mathbb{F}_{q_2}}\zeta_p^{\mathrm{Tr}_{p}^{q_2}\big(zy(v+\alpha)\big)}.}
    	\end{align*}
    	If $\alpha\notin  \Prj_{2}(H_r)$, then by $\displaystyle\sum_{y\in \mathbb{F}_{q_2}}\zeta_p^{\mathrm{Tr}_{p}^{q_2}\big(zy(v+\alpha)\big)}=0$, we obtain $|N(H_r)|=q^{s-(r+1)}$. \\
    	If $\alpha \in \mathrm{Prj}_2(H_r)$, by Lemma~\ref{lem:6}, we have
    	\begin{align*}
    		&\substack{q^{r+1}|N(H_r)|
    		=q^s+q_{2}\sum\limits_{(u,-\alpha)\in H_{r}}\sum\limits_{z\in \mathbb{F}_{q}^{\ast}}\zeta_p^{-\mathrm{Tr}_{p}^{q}(z\beta)}\sum\limits_{x\in \mathbb{F}_{q_1}}\zeta_p^{\mathrm{Tr}_{p}^{q}\big(zf(x)+z\mathrm{Tr}_q^{q_1}(ux)\big)}}
    		\\&\substack{=q^s+q_{2}\sum\limits_{(u,-\alpha)\in H_{r}}\sum\limits_{z\in \mathbb{F}_{q}^{\ast}}\zeta_p^{-\mathrm{Tr}_{p}^{q}(z\beta)}\Big(\sum\limits_{x\in \mathbb{F}_{q_1}}\zeta_p^{\mathrm{Tr}_{p}^{q}(zf(x)+z\mathrm{Tr}_q^{q_1}(ux))}\Big)}
    		\\&=\left\{\begin{array}{ll}
    			\substack{q^s}, & \substack{ u\notin \mathrm{Im}(L_f),} \\
    			\substack{q^s+\varepsilon_f q^{s}(p^*)^{-\frac{mR}{2}}\sum\limits_{(u,-\alpha)\in H_{r}}\sum\limits_{z\in \mathbb{F}_{q}^{\ast}}\zeta_p^{-\mathrm{Tr}_p^q(z(\beta+f(x_u)))}, } &\substack{ 2|R,u\in \mathrm{Im}(L_f),}\\
    			\substack{q^s+(-1)^{m-1}\eta(-1)\varepsilon_f q^{s}(p^*)^{-\frac{mR}{2}}\sum\limits_{(u,-\alpha)\in H_{r}}\sum\limits_{z\in \mathbb{F}_{q}^{\ast}}\eta(z)\zeta_p^{-\mathrm{Tr}_p^q(z(\beta+f(x_u)))},}  & \substack{2\not|R,u\in \mathrm{Im}(L_f).}
    		\end{array}
    		\right.
    	\end{align*}
    
So, the desired result follows from Lemma~\ref{cor:1}. Thus, we complete the proof.
\end{proof}

On the other hand, since the dimension of $H_r^\perp$ is $s-r$, by Proposition \ref{pro:d_r}, we give another general formula,
that is
\begin{align}\label{eq:d_r:2}
        d_{r}(C_{D})&=n-\max\Big\{|H_{s-r}\cap D|: H_{s-r} \in [\mathbb{F},s-r]_{q}\Big\}. 
\end{align}

From the orthogonality of exponential sums and Lemma \ref{lem:1}, we obtain the following Lemma.
\begin{lemma}\label{lem:dh}
Let $f(x)$ be quadratic forms over $\E$ defined in \eqref{eq:f} and $D$ be defined as \eqref{set:D1}. Let $H=H_1\times H_2$ be an $r$-dimensional subspace of $\F$, where $H_1$ (resp $H_2$) is an $r_1$ (resp $r_2$) -dimensional subspace of $\F_{q_1}$ (resp $\F_{q_2}$) and $r=r_1+r_2$, defining $\Big|D\cap H\Big|=\#\{(u,v)\in H:f(x)+\Tr_q^{q_2}(\alpha y)=\beta\}$. Then, we have
the following.

\rm{(1)} When $\Tr_q^{q_2}(\alpha H_2)\neq\{0\}$, we have $\Big|D\cap H\Big|=q^{r-1}$.

 \rm{(2)} When $\Tr_q^{q_2}(\alpha H_2)=\{0\}$, we have
  
  \begin{equation*}
  \Big|D\cap H\Big|=q^{r_2}\Big|\overline{D}_\beta\cap H_1\Big|=\left\{\begin{array}{ll}
q^{r-1}+\upsilon(\beta)\eta((-1)^{\frac{R_H}{2}})\varepsilon_{H}q^{r-\frac{R_H+2}{2}},  &\textrm{ $2|R_H$}, \\
q^{r-1}+\eta((-1)^{\frac{R_H-1}{2}}\beta)\varepsilon_{H}q^{r-\frac{R_H+1}{2}},  &\textrm{$2\not|R_H$}.
\end{array}
\right.
  \end{equation*}
\end{lemma}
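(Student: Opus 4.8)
The plan is to evaluate $|D\cap H|$ by the orthogonality of additive characters and then exploit the product structure $H=H_1\times H_2$ to split the resulting sum into an $x$-part and a $y$-part. First I would write
\begin{align*}
|D\cap H|&=\frac{1}{q}\sum_{(x,y)\in H}\sum_{z\in\F_q}\zeta_p^{\Tr_p^q\big(z(f(x)+\Tr_q^{q_2}(\alpha y)-\beta)\big)}\\
&=\frac{1}{q}\sum_{z\in\F_q}\zeta_p^{-\Tr_p^q(z\beta)}\Big(\sum_{x\in H_1}\zeta_p^{\Tr_p^q(zf(x))}\Big)\Big(\sum_{y\in H_2}\zeta_p^{\Tr_p^{q_2}(z\alpha y)}\Big),
\end{align*}
where the factorization uses $H=H_1\times H_2$ together with the transitivity of the trace $\Tr_p^q\big(z\,\Tr_q^{q_2}(\alpha y)\big)=\Tr_p^{q_2}(z\alpha y)$, valid since $z\in\F_q$.

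Next I would analyze the inner $y$-sum. Because $H_2$ is an $\F_q$-subspace and $\alpha\in\F_{q_2}$, the map $y\mapsto\Tr_q^{q_2}(\alpha y)$ is $\F_q$-linear, so $\Tr_q^{q_2}(\alpha H_2)$ is an $\F_q$-subspace of the one-dimensional space $\F_q$ and hence equals either $\{0\}$ or $\F_q$; this is precisely the dichotomy in the statement. For each $z\in\F_q^*$ the sum $\sum_{y\in H_2}\zeta_p^{\Tr_p^q(z\Tr_q^{q_2}(\alpha y))}$ is the sum over the subgroup $H_2$ of an additive character, so it equals $q^{r_2}$ when that character is trivial on $H_2$ and $0$ otherwise. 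In case (1), $\Tr_q^{q_2}(\alpha H_2)=\F_q$, and surjectivity makes the character $y\mapsto\Tr_p^q(z\Tr_q^{q_2}(\alpha y))$ nontrivial for every $z\neq0$, so the $y$-sum vanishes; in case (2) the character is trivial and the $y$-sum equals $q^{r_2}$ for all $z$.

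From here the two cases separate cleanly. For part (1), only the $z=0$ term survives, giving $|D\cap H|=\frac{1}{q}\,q^{r_1}q^{r_2}=q^{r-1}$. For part (2), the $y$-sum contributes the constant factor $q^{r_2}$ uniformly in $z$, so
\[
|D\cap H|=q^{r_2}\cdot\frac{1}{q}\sum_{x\in H_1}\sum_{z\in\F_q}\zeta_p^{\Tr_p^q(z(f(x)-\beta))}=q^{r_2}\,\big|\overline{D}_\beta\cap H_1\big|.
\]
Equivalently, when $\Tr_q^{q_2}(\alpha H_2)=\{0\}$ the defining equation reduces to $f(x)=\beta$ with $y$ unconstrained, yielding the same product directly. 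Applying Lemma~\ref{lem:1} to the quadratic form $f|_{H_1}$ on the $r_1$-dimensional space $H_1$ with $c=\beta$, and absorbing the factor $q^{r_2}$ into the exponents via $r=r_1+r_2$, produces the stated closed form in terms of $R_H$ and $\varepsilon_H$.

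The main obstacle I anticipate is the bookkeeping in the $y$-sum: one must invoke trace transitivity to rewrite $\Tr_p^{q_2}(z\alpha y)$ and then argue, using that $\Tr_q^{q_2}(\alpha H_2)$ is either $\{0\}$ or all of $\F_q$, that nontriviality of the resulting character on $H_2$ is exactly equivalent to the hypothesis dividing the two cases. Everything else is routine once that equivalence is in hand, since the $x$-part is precisely the count supplied by Lemma~\ref{lem:1} and the final exponent matching is a direct consequence of $r=r_1+r_2$.
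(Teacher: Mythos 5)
Your proof is correct and follows exactly the route the paper intends: the paper gives no detailed argument for this lemma, stating only that it follows ``from the orthogonality of exponential sums and Lemma~\ref{lem:1},'' which is precisely your character-sum factorization over $H_1\times H_2$ combined with the solution count of Lemma~\ref{lem:1} applied to $f|_{H_1}$. Your dichotomy argument for the $y$-sum (that $\Tr_q^{q_2}(\alpha H_2)$ is either $\{0\}$ or all of $\F_q$) and the identification $R_H=R_{H_1}$, $\varepsilon_H=\varepsilon_{H_1}$ correctly fill in the details the paper leaves implicit.
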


\begin{corollary}\label{cor:dh} Let $f(x)$ be quadratic forms over $\E$ defined in \eqref{eq:f} and $D$ be defined as \eqref{set:D1} and $H$ is an $r$-dimensional subspace of $\F$. If $\beta\neq 0$, then
$$\Big|D\cap H\Big| \leq 2q^{r-1}.$$
\end{corollary}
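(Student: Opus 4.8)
The goal is to bound $|D\cap H|$ by $2q^{r-1}$ whenever $\beta\neq 0$, and the plan is to simply invoke the two-case computation of $|D\cap H|$ furnished by Lemma~\ref{lem:dh} and check that the stated bound holds in each case. First I would write $H=H_1\times H_2$ as in Lemma~\ref{lem:dh} (after reducing to this product form, which is the setting in which that lemma is stated) and split according to whether $\Tr_q^{q_2}(\alpha H_2)=\{0\}$ or not. In the case $\Tr_q^{q_2}(\alpha H_2)\neq\{0\}$, Lemma~\ref{lem:dh}(1) gives exactly $|D\cap H|=q^{r-1}\leq 2q^{r-1}$, so nothing further is needed there.

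The substantive case is $\Tr_q^{q_2}(\alpha H_2)=\{0\}$, where Lemma~\ref{lem:dh}(2) gives
\[
\Big|D\cap H\Big|=\left\{\begin{array}{ll}
q^{r-1}+\upsilon(\beta)\eta((-1)^{\frac{R_H}{2}})\varepsilon_{H}q^{r-\frac{R_H+2}{2}},  &\textrm{$2\mid R_H$}, \\
q^{r-1}+\eta((-1)^{\frac{R_H-1}{2}}\beta)\varepsilon_{H}q^{r-\frac{R_H+1}{2}},  &\textrm{$2\nmid R_H$}.
\end{array}
\right.
\]
Here I would use the hypothesis $\beta\neq 0$ crucially: in the even-rank formula it forces $\upsilon(\beta)=-1$ (since $\upsilon(x)=q-1$ only at $x=0$), so the second term is actually nonpositive and the bound $|D\cap H|\leq q^{r-1}<2q^{r-1}$ follows immediately. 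In the odd-rank formula the extra term is $\pm q^{r-\frac{R_H+1}{2}}$ with the sign controlled by $\eta$; bounding $|\eta(\cdots)|\le 1$ and noting $q^{r-\frac{R_H+1}{2}}\leq q^{r-1}$ (which holds because $R_H\geq 1$ whenever the second term is present) yields $|D\cap H|\leq q^{r-1}+q^{r-1}=2q^{r-1}$.

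The one point deserving care is the degenerate possibility $R_H=0$, i.e.\ $f|_{H_1}$ identically zero. When $R_H=0$ the quadratic form restricted to $H_1$ vanishes, the defining equation becomes $\beta=0$, and since $\beta\neq0$ this contributes $|D\cap H|=0$, comfortably within the bound; so this subcase can be dispatched separately or absorbed by observing it falls under the even-rank formula with the correct sign. The main (and only real) obstacle is therefore the clean sign bookkeeping in the odd-rank case together with confirming that the exponent $r-\frac{R_H+1}{2}$ never exceeds $r-1$; both reduce to the elementary inequality $R_H\geq 1$, so I expect no genuine difficulty and the whole proof to be short, amounting to reading off Lemma~\ref{lem:dh} and applying $\upsilon(\beta)=-1$, $|\eta|\leq 1$.
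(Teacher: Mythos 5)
Your argument breaks down at its very first step: the ``reduction to product form'' that you pass over in one clause is not a reduction at all, and it is precisely where the content of the corollary lies. Lemma~\ref{lem:dh} applies only to subspaces of the special shape $H=H_1\times H_2$ with $\dim H=\dim H_1+\dim H_2$, whereas the corollary concerns an \emph{arbitrary} $r$-dimensional subspace of $\F=\F_{q_1}\times\F_{q_2}$, and such a subspace is typically not a product: already $H=\langle(a,b)\rangle$ with $a\neq 0\neq b$ is one-dimensional while $\Prj_1(H)\times\Prj_2(H)$ is two-dimensional. Monotonicity does not repair this, since $H\subseteq \Prj_1(H)\times\Prj_2(H)$ only yields $|D\cap H|\le 2q^{\dim\Prj_1(H)+\dim\Prj_2(H)-1}$, which can be as weak as $2q^{2r-1}$. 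Worse, the conclusion you read off from Lemma~\ref{lem:dh}(1) in the mixed case is quantitatively false for non-product subspaces. Take $q\ge 5$, choose $c\neq 0$ with $1+4\beta c$ a nonzero square, pick $a_0$ with $f(a_0)=c$ and $b_0$ with $\Tr_q^{q_2}(\alpha b_0)=1$, and set $H=\langle(a_0,b_0)\rangle\oplus(\{0\}\times H_2')$ with $H_2'$ contained in the kernel of $y\mapsto\Tr_q^{q_2}(\alpha y)$. For $(x,y)=k(a_0,b_0)+(0,y')$ the defining equation becomes $k^2c+k=\beta$, which has two roots, and \emph{every} point of the two corresponding cosets of $H'$ lies in $D$; hence $|D\cap H|=2q^{r-1}$, not $q^{r-1}$. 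So the case $\Tr_q^{q_2}(\alpha\Prj_2(H))\neq\{0\}$, which you declare needs ``nothing further,'' is in fact the hard case, and your proposal does not touch it. The paper's proof is devoted to exactly this issue: writing $H_i=\Prj_i(H)$, when $\Tr_q^{q_2}(\alpha H_2)=\{0\}$ it decomposes $H=(J_1\times J_2)\oplus H_3$ so that the count fibers over $H_1$ with constant fiber size, giving $|D\cap H|=q^{l}\,|\overline{D}_\beta\cap H_1|$, which is then bounded via Lemma~\ref{lem:1}; when $\Tr_q^{q_2}(\alpha H_2)\neq\{0\}$ it decomposes $H=\langle(a_0,b_0)\rangle\oplus H'$ with $\Tr_q^{q_2}(\alpha b_0)=1$ and $\Tr_q^{q_2}(\alpha\Prj_2(H'))=\{0\}$, and estimates $|D\cap H|$ coset by coset.

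A secondary slip inside the (product) case you do treat: for $2\mid R_H$ you claim $\beta\neq 0$ forces the correction term to be nonpositive. It does not; that term equals $-\eta((-1)^{R_H/2})\varepsilon_H\,q^{\,r-(R_H+2)/2}$, and $\eta((-1)^{R_H/2})\varepsilon_H$ may well be $-1$, making the term positive. The bound survives anyway because $q^{\,r-(R_H+2)/2}\le q^{r-1}$ --- the same absolute-value estimate you use in the odd case --- but the reasoning as written is incorrect.
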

\begin{proof} Define $H_1=\Prj_1(H),H_2=\Prj_2(H)$,we have $H\subseteq H_1\times H_2$.

If $\Tr_q^{q_2}(\alpha H_2)=\{0\}$, we decompose $H$ as follows 
\begin{equation*}\label{eq:sd1} H=(J_1\times J_2) \bigoplus H_3,\end{equation*}
which satisfies that \begin{align*}
&J_1=\Big\langle a_1,a_2,\cdots,a_k\Big\rangle,\\
&J_2=\Big\langle b_1,b_2,\cdots,b_l\Big\rangle,\\
&H_3=\Big\langle(\mu_1,\nu_1),(\mu_2,\nu_2),\cdots,(\mu_t,\nu_t)\Big\rangle,
\end{align*}
and $a_1,a_2,\cdots,a_k,\mu_1,\mu_2,\cdots,\mu_t$ (resp $b_1,b_2,\cdots,b_l,\nu_1,\nu_2,\cdots,\nu_t$) are $\F_q$-basis of $H_1$ (resp $H_2$), then by Lemma~\ref{lem:1} we have $\Big|D\cap H\Big|=q^l\Big|\overline{D}_{\beta}\cap H_1 \Big| \leq 2q^{r-1}$.  

If $\Tr_q^{q_2}(\alpha H_2)\neq\{0\}$, we can decompose $H$ as follows
\begin{equation*}\label{eq:sd2} H=\Big\langle(a_0,b_0)\Big\rangle \bigoplus H',\end{equation*}
where $\Tr_q^{q_2}(\alpha b_0)=1$ and $\Tr_q^{q_2}(\alpha \Prj_2(H')) = \{0\}$, 
then we have 
\begin{align*}
   \Big|D\cap H\Big|&=\Big|D\cap \Big\langle(a_0,b_0)\Big\rangle\Big|+\Big|D\cap (H'\setminus\{(0,0)\})\Big|\\
   &+ \sum_{k\in\F_q^*}\Big|D\cap (k(a_0,b_0)+H'\setminus\{(0,0)\})\Big|\\
   &\leq 1 + 2q^{r-2} + (q-1)q^{r-2}\\
   &=(q+1)q^{r-2}+1 \\
   &< 2q^{r-1}. 
\end{align*}

Therefore the desired conclusion is obtained.
\end{proof}

\begin{remark}
From the proof of Corollary~\ref{cor:dh}, if $R_{H_1}=1,\varepsilon_{H_1}=\eta(\beta)$ and $\Tr_q^{q_2}(\alpha H_2)=\{0\}$, then the maximum value of $\Big|D\cap H\Big|$ is equals to $2q^{r-1}$.
\end{remark}

The following two lemmas are the generalized forms of \cite[Lemma 2.5 and Lemma 2.6]{HLL24}, and their proofs are also similar, we omit them.

\begin{lemma}[{\cite[Lemma 2.5]{HLL24}}]\label{lem:2}
Let $f$ be a quadratic form over $\E$ with the rank $R\geq 3$. There exists an $e_0$-dimensional subspace
$H $ of \ $ \E$ such that $\E^{\perp_f}\subseteq H$ and $f(x)=0$ for any $x\in H$, where
$$
e_0=\left\{\begin{array}{ll}
s_1-\frac{R+1}{2}, & \textrm{if $R$ is odd}, \\
s_1-\frac{R}{2}, & \textrm{if $R$ is even and $\varepsilon_{f}=\eta(-1)^{\frac{R}{2}}$},\\
s_1-\frac{R+2}{2}, & \textrm{if $R$ is even and $\varepsilon_{f}=-\eta(-1)^{\frac{R}{2}}$}.
\end{array}
\right.
$$
\end{lemma}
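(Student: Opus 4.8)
The plan is to construct the desired subspace $H$ explicitly using the diagonalized form of $f$, exploiting the structure theory of quadratic forms over $\F_q$ already set up in the preliminaries. Recall that $\E^{\perp_f}=\Ker(L_f)$ has dimension $s_1-R$, and on a complementary $R$-dimensional subspace the restriction of $f$ is non-degenerate and can be diagonalized as $\lambda_1 y_1^2+\cdots+\lambda_R y_R^2$. Since $f$ vanishes identically on $\E^{\perp_f}$, I would always include $\E^{\perp_f}$ as a direct summand of $H$; the problem then reduces to finding a maximal totally isotropic subspace inside the non-degenerate $R$-dimensional part, and adding its dimension $s_1-R$ back at the end.

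First I would treat the non-degenerate part: I need the largest subspace on which the diagonal form $\sum_{i=1}^R\lambda_i y_i^2$ vanishes. This is precisely the Witt index (maximal dimension of a totally isotropic subspace) of a non-degenerate quadratic form in $R$ variables over $\F_q$, which is a classical invariant. The key facts are that a two-dimensional non-degenerate form represents zero nontrivially (is a hyperbolic plane) exactly when its discriminant has the appropriate sign, and that one can pair up variables into hyperbolic planes. Concretely, I would group the $R$ diagonal terms into hyperbolic pairs of the form $\lambda_i y_i^2+\lambda_j y_j^2$; such a pair is isotropic iff $-\lambda_i\lambda_j$ is a square, i.e. iff $\eta(-\lambda_i\lambda_j)=1$. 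Each isotropic pair contributes a one-dimensional isotropic line, and by choosing the pairing appropriately I can maximize the number of such lines. Tracking the sign $\varepsilon_f=\eta(\Delta_f)$ and the factor $\eta(-1)^{R/2}$ (in the even case) determines whether the form is hyperbolic or only has Witt index one less than maximal, which is exactly the trichotomy appearing in the statement of $e_0$.

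The three cases then fall out as follows. When $R$ is odd, $R=2t+1$, the non-degenerate part splits off $t$ hyperbolic planes plus one leftover one-dimensional anisotropic term, giving a totally isotropic subspace of dimension $t=\frac{R-1}{2}$; adding $s_1-R$ yields $e_0=s_1-R+\frac{R-1}{2}=s_1-\frac{R+1}{2}$. When $R$ is even, $R=2t$, the form is hyperbolic (Witt index $t=\frac R2$) precisely when $\varepsilon_f=\eta(-1)^{R/2}$, giving isotropic dimension $\frac R2$ and hence $e_0=s_1-\frac R2$; in the non-hyperbolic even case $\varepsilon_f=-\eta(-1)^{R/2}$ the Witt index drops to $\frac R2-1$, leaving an anisotropic plane and giving $e_0=s_1-\frac{R+2}{2}$. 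I would verify the sign bookkeeping by noting that a hyperbolic plane has discriminant $-1$ up to squares, so the total discriminant of $t$ hyperbolic planes contributes $\eta(-1)^t=\eta(-1)^{R/2}$, which is the condition distinguishing the two even cases.

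The main obstacle I anticipate is the sign computation, not the existence of isotropic subspaces. Establishing that the Witt index equals the claimed value requires carefully relating the product of the $\lambda_i$ (which defines $\varepsilon_f$) to whether all diagonal terms can be paired into hyperbolic planes, and this is exactly where the factor $\eta(-1)^{R/2}$ enters through the discriminant of a hyperbolic plane. Since the hypothesis $R\ge 3$ guarantees at least one hyperbolic pair can be formed in every case, the construction never degenerates; the content is purely in matching the resulting dimension to the correct branch of the sign condition. I would therefore organize the proof around the classification of non-degenerate quadratic forms over $\F_q$ into hyperbolic and elliptic types and read off $e_0$ directly, rather than computing character sums.
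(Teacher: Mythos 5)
Your overall reduction is the correct one, and it is essentially the argument the result rests on: since $p$ is odd, $f(x)=F(x,x)$ vanishes identically on the radical $\E^{\perp_f}=\Ker(L_f)$, which has dimension $s_1-R$, so it suffices to adjoin to the radical a maximal totally isotropic subspace of the non-degenerate $R$-dimensional part, whose dimension is the Witt index: $\frac{R-1}{2}$ for odd $R$, $\frac{R}{2}$ in the even hyperbolic case $\varepsilon_f=\eta(-1)^{R/2}$, and $\frac{R-2}{2}$ in the even elliptic case. Your dimension arithmetic matches $e_0$ in all three branches, and your discriminant bookkeeping (a hyperbolic plane has discriminant $-1$ modulo squares, so $t$ of them contribute $\eta(-1)^t$) correctly explains where the condition $\eta(-1)^{R/2}$ comes from. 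Note that the paper itself offers no proof of this lemma: it cites \cite{HLL24} and states that the proof carries over, and the argument there is exactly this classification-based one, so in spirit you and the paper agree.

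There is, however, a genuine gap in the mechanism you propose for producing the isotropic subspace. Pairing the diagonal coordinates and testing whether $\lambda_iy_i^2+\lambda_jy_j^2$ is isotropic via $\eta(-\lambda_i\lambda_j)=1$, ``choosing the pairing appropriately,'' cannot in general attain the Witt index. Take $q=3$ and the form $y_1^2+y_2^2+y_3^2+y_4^2$: every pair has $-\lambda_i\lambda_j=-1$, a non-square in $\F_3$, so no pairing of coordinates yields even one isotropic line; yet this form is hyperbolic (its discriminant is $1=(-1)^{2}$ times a square) with Witt index $2$, as the vector $(1,1,1,0)$ shows. Isotropic vectors of a diagonal form generally mix more than two coordinates, so the Witt index cannot be read off from pairwise tests on the $\lambda_i$. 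The standard repair is Witt decomposition: any non-degenerate quadratic form in at least three variables over $\F_q$ has a nontrivial zero (this already follows from the counting formula in Lemma~\ref{lem:1} applied with $c=0$, or from the Chevalley--Warning theorem), hence splits off a hyperbolic plane; iterating leaves an anisotropic kernel of dimension $0$, $1$, or $2$, and in the even case the sign condition $\varepsilon_f=\pm\eta(-1)^{R/2}$ decides between kernel dimension $0$ and $2$. Alternatively, cite the classification of non-degenerate quadratic forms over finite fields outright; with that as a black box, your dimension count completes the proof.
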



\begin{lemma}\label{lem:2-2}
Let $f$ be a quadratic form over $\E$ with the rank $R\geq 3$. For each $a\in \F_q^*$, there exists an $(s_1-R+l_0)$-dimensional subspace
$H$ of \ $ \E$ such that $R_H=1, \varepsilon_{H}=\eta(a)$, where $
l_0=\frac{R-1}{2} $,if $R$ is odd; $
l_0=\frac{R}{2} $,if $R$ is even.
\end{lemma}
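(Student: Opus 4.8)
The plan is to realize $H$ as $U\oplus\langle w_0\rangle$, where $U$ is a large totally isotropic subspace containing the radical $\E^{\perp_f}$ and $w_0$ is a single vector on which $f$ takes a prescribed value of sign $\eta(a)$. Passing to the quotient $\E/\E^{\perp_f}$, the form $f$ becomes a nondegenerate quadratic form of rank $R$, and I would work with its Witt geometry throughout.

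First I would apply Lemma~\ref{lem:2} to obtain a totally isotropic subspace $H_0$ with $\E^{\perp_f}\subseteq H_0$, $f|_{H_0}=0$, and $\dim H_0=e_0$. Comparing $e_0$ with $(s_1-R)+(l_0-1)$ case by case (odd $R$; even $R$ with $\varepsilon_f=\eta(-1)^{R/2}$; even $R$ with $\varepsilon_f=-\eta(-1)^{R/2}$) shows that $e_0\ge (s_1-R)+(l_0-1)$ in every case, with equality precisely in the last one. Hence I can choose $U$ with $\E^{\perp_f}\subseteq U\subseteq H_0$ and $\dim U=(s_1-R)+(l_0-1)$; being inside $H_0$, the subspace $U$ is totally isotropic and contains the radical of $f$.

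Next I would examine $U^{\perp_f}=\{x\in\E:F(x,U)=0\}$. Since $U$ contains the radical, a dimension count in $\E/\E^{\perp_f}$ gives $\dim U^{\perp_f}=s_1-l_0+1$, so that $\dim U^{\perp_f}-\dim U=R-2l_0+2$, which equals $3$ when $R$ is odd and $2$ when $R$ is even. Because $U$ is totally isotropic with $U\subseteq U^{\perp_f}$, the form $f$ descends to a well-defined quadratic form on $U^{\perp_f}/U$, and the identity $(U^{\perp_f})^{\perp_f}=U$ shows this descended form is nondegenerate of rank $2$ or $3$. Applying Lemma~\ref{lem:1} to this quotient form with $c=a\neq0$, the number of solutions of (the quotient form)$=a$ equals $q-(\pm1)$ in the rank-$2$ case and $q^2\pm q$ in the rank-$3$ case, hence is positive; lifting a solution I obtain $w_0\in U^{\perp_f}$ with $f(w_0)=a$, and $w_0\notin U$ since $f|_U=0$.

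Finally I would set $H=U\oplus\langle w_0\rangle$, which has dimension $(s_1-R)+l_0$. As $U$ is totally isotropic and $w_0\in U^{\perp_f}$, the space $U$ lies in the radical of $f|_H$, while $f(w_0)\neq0$ forces the radical to be exactly $U$; hence $R_H=\dim H-\dim U=1$. For a rank-one form $\Delta_{f|_H}$ equals the unique nonzero diagonal entry, namely $f(w_0)=a$ up to a square factor, so $\varepsilon_H=\eta(a)$, as required. The step I expect to be the main obstacle is the bookkeeping in the third paragraph: verifying that $f$ induces a \emph{nondegenerate} form of exactly the claimed rank on $U^{\perp_f}/U$, which hinges on $U$ being totally isotropic and on the perpendicularity identities in the possibly degenerate ambient space $\E$. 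Once that is secured, the representability of $a$ by a rank-$\ge2$ form and the rank-one sign computation are routine.
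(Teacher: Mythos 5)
Your proof is correct, and it is worth noting that the paper itself offers nothing to compare against line by line: Lemma~\ref{lem:2-2} is stated as the generalized form of \cite[Lemma 2.6]{HLL24} with the proof explicitly omitted, and in that reference the analogous statement is obtained by an explicit construction from the canonical decomposition of a quadratic form (radical plus hyperbolic planes plus anisotropic part), taking isotropic vectors from hyperbolic pairs together with one vector representing the prescribed square class. Your route builds the same kind of object, $H=U\oplus\langle w_0\rangle$, but entirely from the paper's own stated lemmas, which makes it self-contained and coordinate-free. The key steps all check out: your case comparison showing $e_0\geq (s_1-R)+(l_0-1)$, with equality exactly in the case $2\mid R$, $\varepsilon_f=-\eta(-1)^{\frac{R}{2}}$, is right, so Lemma~\ref{lem:2} does supply a totally isotropic $U$ with $\E^{\perp_f}\subseteq U$ of the needed dimension; since $\E^{\perp_f}\subseteq U$, one indeed has $\dim U^{\perp_f}=s_1-l_0+1$ and $(U^{\perp_f})^{\perp_f}=U$, so the radical of $F$ restricted to $U^{\perp_f}$ is exactly $U$ and the induced form on $U^{\perp_f}/U$ is nondegenerate of rank $3$ ($R$ odd) or $2$ ($R$ even); and once $w_0\in U^{\perp_f}\setminus U$ with $f(w_0)=a$ is found, the radical of $f|_H$ is exactly $U$, giving $R_H=1$ and $\varepsilon_H=\eta(f(w_0))=\eta(a)$. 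One streamlining remark: the quotient space is dispensable, and dropping it keeps Lemma~\ref{lem:1} within its stated scope (subspaces of $\E$ rather than abstract quadratic spaces). Simply apply Lemma~\ref{lem:1} to the subspace $U^{\perp_f}$ itself, whose restricted form has rank $R_{U^{\perp_f}}=2$ or $3$ by your radical computation; the resulting count of solutions of $f(x)=a$ in $U^{\perp_f}$ is positive in every case, and any solution automatically lies outside $U$ because $f$ vanishes on $U$. In sum, your argument buys a proof the paper does not actually contain, at the cost of depending on Lemma~\ref{lem:2}, whereas the cited construction in \cite{HLL24} is independent of that lemma but requires invoking the classification of quadratic forms over $\F_q$.
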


By Theorem 3.8 and its proof in \cite{HLL24}, similarly, we have the following theorem.
\begin{theorem}[{\cite[Theorem 3.8]{HLL24}}]\label{thm:wh} Let $\alpha \in \F_{q_2}^*,\beta = 0$ and $f$ be a homogeneous quadratic function defined in \eqref{eq:f} with the sign $\varepsilon_f$ and the rank $R\geq 3$. Let $e_0$ be defined as in Lemma~\ref{lem:2}. Then we have the following
\begin{itemize}
\item[(1)] When $0< r\leq s_1-e_0$, we have
$$
d_{r}(C_{\D})=\left\{\begin{array}{ll}
q^{s-1}-q^{s-1-r}-q^{s-1-\frac{R+1}{2}}, \textrm{if $2\nmid R$\ }, \\
q^{s-1}-q^{s-1-r}-(q-1)q^{s-1-\frac{R+2}{2}}, \textrm{if $2\mid R$, $\varepsilon_{f}=\eta(-1)^{\frac{R}{2}}$},\\
q^{s-1}-q^{s-1-r}-q^{s-1-\frac{R+2}{2}},  \textrm{if $2\mid R$, $\varepsilon_{f}=-\eta(-1)^{\frac{R}{2}}$}.
\end{array}
\right.
$$
\item[(2)]
When $s_1-e_0+1 \leq r \leq s$, we have
$$
d_{r}(C_{\D})=q^{s-1}-q^{s-r}.
$$
\end{itemize}
\end{theorem}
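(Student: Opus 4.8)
The plan is to evaluate $d_r(C_D)$ through the identity \eqref{eq:d_r:3}, so that everything reduces to maximizing $N(H_r)$ over the $r$-dimensional subspaces $H_r$ of $\F$. Lemma~\ref{lem:d_r:2} already splits this into the case $\alpha\notin\Prj_2(H_r)$, where $N(H_r)=q^{s-(r+1)}$, and the case $\alpha\in\Prj_2(H_r)$, where (with $\beta=0$) $N(H_r)=q^{s-(r+1)}\bigl(1+\varepsilon_f(p^*)^{-mR/2}\Sigma\bigr)$ for $R$ even, and the analogous expression with exponent $-m(R-1)/2$ for $R$ odd, where $\Sigma=\sum_{(u,-\alpha)\in H_r}g(f(x_u))$ with $g=\upsilon$ (resp. $g=\eta$). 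First I would record the arithmetic simplification $\varepsilon_f(p^*)^{-mR/2}=\epsilon q^{-R/2}$ (and $\varepsilon_f(p^*)^{-m(R-1)/2}=\epsilon q^{-(R-1)/2}$), using $\eta(-1)=(-1)^{m(p-1)/2}$, so that $N(H_r)=q^{s-(r+1)}(1+\epsilon q^{-\rho}\Sigma)$ with $\rho=R/2$ or $(R-1)/2$; since $q^{-\rho}>0$, maximizing $N(H_r)$ becomes maximizing $\epsilon\Sigma$, and the case $\alpha\notin\Prj_2(H_r)$ is always beaten once the optimal correction is made positive.

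Next I would isolate the two structural facts that drive the computation. Writing $T=\{u\in\Im(L_f):(u,-\alpha)\in H_r\}$, the set $\{u:(u,-\alpha)\in H_r\}$ is a coset of $W=\{u:(u,0)\in H_r\}$, so $|T|\le q^{\dim W}\le q^{r-1}$; and $u\mapsto f(x_u)$ is a nondegenerate rank-$R$ quadratic form on $\Im(L_f)$ whose zero locus corresponds, under $u\leftrightarrow x_u$, to the vanishing locus of $f$. From the pointwise bounds $\upsilon(t)\le q-1$, $-\upsilon(t)\le 1$ and $|\eta(t)|\le1$ together with $|T|\le q^{r-1}$, one obtains the clean upper bounds $\epsilon\Sigma\le (q-1)q^{r-1}$ (for $R$ even, $\epsilon=1$) and $\epsilon\Sigma\le q^{r-1}$ (for $R$ even with $\epsilon=-1$, and for $R$ odd). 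Substituting these into $N(H_r)$ yields exactly the three correction terms in part~(1), so the whole of part~(1) is an achievability question.

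For achievability in part~(1) ($0<r\le s_1-e_0$) I would construct $H_r$ realizing equality in each case, keeping every dimension within range via $r\le s_1-e_0$ (Lemma~\ref{lem:2}). When $\epsilon=1$ and $R$ is even, take $W\subseteq\Im(L_f)$ totally isotropic of dimension $r-1$ and set $H_r=(W\times\{0\})\oplus\langle(0,-\alpha)\rangle$, so $T=W$, $f(x_u)\equiv 0$, and $\Sigma=(q-1)q^{r-1}$. When $R$ is odd, take $T$ a coset $u_0+W$ on which $f(x_u)$ is constant equal to a nonzero value of quadratic character $\epsilon$ (choose $W$ isotropic, $u_0\perp W$, $f(x_{u_0})\ne0$ of the right sign), giving $\epsilon\Sigma=q^{r-1}$. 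The delicate case is $\epsilon=-1$, $R$ even, where no linear $T$ attains the bound: here I would take a genuine zero-free affine coset $u_0+W\subseteq\Im(L_f)$ of dimension $r-1$, obtained by stacking a maximal isotropic subspace with one anisotropic-plane direction (this is where Lemma~\ref{lem:2-2} on rank-one restrictions enters, furnishing a subspace on which $f$ is constant and nonzero off its radical), forcing $|T\cap\{f(x_u)=0\}|=0$ and hence $\Sigma=-q^{r-1}$.

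For part~(2) ($s_1-e_0+1\le r\le s$) I would argue by saturation. Since $N(H_r)=\bigl|H_r^\perp\cap\{f(x)+\Tr_q^{q_2}(\alpha y)=0\}\bigr|\le |H_r^\perp|=q^{s-r}$, it suffices to exhibit an $(s-r)$-dimensional subspace on which $f(x)+\Tr_q^{q_2}(\alpha y)$ vanishes identically; because the subspace is linear and $p$ is odd, scaling forces $f\equiv0$ on its first projection and $\Tr_q^{q_2}(\alpha\,\cdot)\equiv0$ on its second, so the largest such subspace has dimension $e_0+(s_2-1)$, namely a maximal isotropic subspace (Lemma~\ref{lem:2}) times the hyperplane $\{\Tr_q^{q_2}(\alpha y)=0\}$. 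The threshold $e_0+s_2-1=s-(s_1-e_0+1)$ makes saturation possible exactly for $r\ge s_1-e_0+1$, giving $d_r(C_D)=q^{s-1}-q^{s-r}$. I expect the main obstacle to be the achievability in part~(1): checking that the extremal subspaces exist throughout $0<r\le s_1-e_0$ and in particular producing the zero-free coset in the even, negative-sign case, where the optimum is an affine rather than a linear object and the rank-and-sign bookkeeping under restriction to $W$ and to its cosets must be tracked carefully.
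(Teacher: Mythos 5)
Your proposal is correct, and at the top level it follows the same framework the paper relies on: reduce to maximizing $N(H_r)$ via \eqref{eq:d_r:3}, split according to whether $\alpha\in\Prj_{2}(H_r)$ using Lemma~\ref{lem:d_r:2}, and then match an upper bound with explicit extremal subspaces. Note that the paper never writes out a proof of this particular statement: it cites \cite{HLL24} and executes the template in detail only for the companion theorem with $\beta\neq 0$, so the comparison is against that template. Your execution differs in two worthwhile ways. First, for part (1) you build the extremal configurations intrinsically in $\Im(L_f)$, using Witt theory for the induced nondegenerate form $Q(u)=f(x_u)$: a totally isotropic $W$ in the hyperbolic case; an isotropic $W$ plus an orthogonal shift $u_0$ with $\eta(Q(u_0))=\epsilon$ in the odd case; and, in the elliptic case, a zero-free coset made from a maximal isotropic subspace together with an anisotropic-plane direction. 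The paper's template instead takes a rank-one subspace $J_r\subseteq\E$ from Lemma~\ref{lem:2-2} and pushes a basis through $L_f$, forming $H_r=\big\langle(L_f(\mu_1),-\alpha),\ldots,(L_f(\mu_r),-\alpha)\big\rangle$; the two constructions are equivalent via $f(x_{L_f(\mu)})=f(\mu)/4$, but your route has the advantage of making the matching upper bound (your coset bound $|T|\le q^{r-1}$ combined with the pointwise bounds on $\upsilon$ and $\eta$) fully explicit, whereas the paper's companion proof only asserts that its $N(H_r)$ ``reaches its maximum.'' Second, for part (2) you replace the dual formula \eqref{eq:d_r:2} and cited intersection results by a direct saturation argument: $N(H_r)\le q^{s-r}$ with equality iff $H_r^{\perp}$ lies inside the quadric, and your scaling trick (evaluate at $\pm1$, $p$ odd) shows any such subspace sits inside $H\times T_\alpha$ with $H\subseteq\E$ totally isotropic and $T_\alpha=\{y\in\F_{q_2}:\Tr_q^{q_2}(\alpha y)=0\}$, hence has dimension at most $e_0+s_2-1$; this pinpoints exactly why full saturation occurs precisely for $r\ge s_1-e_0+1$ and why $\beta=0$ behaves differently from $\beta\neq0$. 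Two details you assert should still be written out to make the argument airtight: in the odd case at the boundary $r-1=\frac{R-1}{2}$, the induced form on $W^{\perp}/W$ is one-dimensional with discriminant of character $\eta(-1)^{\frac{R-1}{2}}\varepsilon_f=\epsilon$, so the sign you need for $Q(u_0)$ is precisely (and fortunately, the only) achievable one; and in the elliptic case the identity $Q(u_0+w_0+ca)=Q(u_0+ca)\neq 0$ (orthogonality of the hyperbolic and anisotropic parts, plus anisotropy) is what confirms the coset avoids the zero locus up to dimension $r-1=\frac{R}{2}$, allowing part (1) to reach $r=s_1-e_0=\frac{R+2}{2}$ in that case.
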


\begin{theorem} Let $\alpha \in \F_{q_2}^*,\beta \neq 0$ and $f$ be a homogeneous quadratic function defined in \eqref{eq:f} with the sign $\varepsilon_f$ and the rank $R\geq 3$. Let $l_0$ be defined as in Lemma~\ref{lem:2-2}. Then we have the following. \begin{itemize}
\item[(1)] When $0< r\leq l_0$, we have
$$
d_{r}(C_{D})=\left\{\begin{array}{ll}
q^{s-1}-q^{s-1-r}-(q-1)q^{s-1-\frac{R+2}{2}},\textrm{if $2\mid R$, $\varepsilon_{f}=\eta(-1)^{\frac{R}{2}}$},\\
q^{s-1}-q^{s-1-r}-q^{s-1-\frac{R+2}{2}}, \textrm{if $2\mid R$, $\varepsilon_{f}=-\eta(-1)^{\frac{R}{2}}$},\\
q^{s-1}-q^{s-1-r}-q^{s-1-\frac{R+1}{2}}, \textrm{if $2\nmid R$\ }.
\end{array}
\right.
$$
\item[(2)] When $l_0< r\leq s-1$, we have
\item[(2.1)] if $R $ is even, or $R $ is odd and $\eta(\beta)=\eta(-1)^{\frac{R-1}{2}}\varepsilon_f$, then
$$
d_{r}(C_{D})=
q^{s-1}-2q^{s-r-1}.
$$
\item[(2.2)] if $q>3$, $R$ is odd and $\eta(\beta)=-\eta(-1)^{\frac{R-1}{2}}\varepsilon_f$, then
$$
d_{r}(C_{D})=
q^{s-1}-2q^{s-r-1}.
$$
\item[(3)] When $ r= s$, we have
$$
d_{r}(C_{D})=q^{s-1}.
$$
\end{itemize}
\end{theorem}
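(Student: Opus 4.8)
The plan is to read off $d_r(C_D)$ from the two dual extremal problems \eqref{eq:d_r:3} and \eqref{eq:d_r:2}, using whichever is more convenient on a given range of $r$. The case $r=s$ in (3) is immediate: the only $s$-dimensional subspace is $H_s=\mathbb{F}$, and the orthogonality condition defining $N(H_s)$ then forces $x=y=0$, which violates $f(0)+\Tr_q^{q_2}(\alpha\cdot 0)=0\neq\beta$; hence $N(H_s)=0$ and $d_s=q^{s-1}$. Before the substantive ranges I would first record the scalar identity $(p^*)^m=\eta(-1)q$, which gives $\varepsilon_f(p^*)^{-\frac{mR}{2}}=\epsilon\,q^{-\frac{R}{2}}$ and $\varepsilon_f(p^*)^{-\frac{m(R-1)}{2}}=\epsilon\,q^{-\frac{R-1}{2}}$ with $\epsilon=\eta(-1)^{\lfloor R/2\rfloor}\varepsilon_f$, turning the expressions of Lemma~\ref{lem:d_r:2} into clean powers of $q$.

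For part (1), with $0<r\le l_0$, I would work with \eqref{eq:d_r:3} and maximize $N(H_r)$. Only subspaces with $\alpha\in\Prj_2(H_r)$ can beat the baseline $q^{s-r-1}$, and for such an $H_r$ with $\dim\Prj_2(H_r)=1$ the set $\{u:(u,-\alpha)\in H_r\}$ is a coset $u_0+G$ of size $q^{r-1}$; so by Lemma~\ref{lem:d_r:2} the task is to align the character sum $\sum_u\upsilon(\beta+f(x_u))$ (even $R$) or $\epsilon\sum_u\eta(\beta+f(x_u))$ (odd $R$). Its extreme value corresponds to a coset on which $\beta+f(x_u)$ is identically $0$ (even $R$, $\epsilon=1$), never $0$ (even $R$, $\epsilon=-1$), or lies in the fixed square class $\epsilon$ (odd $R$). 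Because here the required coset dimension is $r-1\le l_0-1$, strictly below the maximal isotropic dimension, there is enough slack to build it by an isotropic-shift $x^\ast+W'$ (with $f(x^\ast)$ a suitable constant, $W'$ totally isotropic, $x^\ast\perp W'$), which Lemma~\ref{lem:2} underwrites; substituting the resulting sum into $N(H_r)$ yields the three displayed formulas, and the trivial bound that the sum has at most $q^{r-1}$ terms gives the matching maximality.

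For part (2), with $l_0<r\le s-1$, the lower bound $d_r\ge q^{s-1}-2q^{s-r-1}$ is immediate from \eqref{eq:d_r:2} and Corollary~\ref{cor:dh} (here $n=q^{s-1}$). For the reverse inequality I would exhibit a subspace meeting the remark following Corollary~\ref{cor:dh}: a product $H=H_1\times H_2$ with $R_{H_1}=1$, $\varepsilon_{H_1}=\eta(\beta)$ and $\Tr_q^{q_2}(\alpha H_2)=\{0\}$, so that $|D\cap H|=2q^{\dim H-1}$. Lemma~\ref{lem:2-2} supplies such an $H_1$ of dimension $s_1-R+l_0$ for either sign, and with $\dim H_2\le s_2-1$ this realizes $\dim H=s-r$ from $s-R+l_0-1$ downward; for even $R$ that bottom value is $s-l_0-1$, so every $r\ge l_0+1$ is covered (shrinking $H_1,H_2$ handles larger $r$). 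For odd $R$ the product only reaches $\dim H=s-l_0-2$, i.e.\ $r\ge l_0+2$, and the boundary $r=l_0+1$ becomes the crux, splitting according to $\eta(\beta)=\pm\epsilon$.

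This boundary for odd $R$ is the main obstacle. When $\eta(\beta)=\epsilon$ (case (2.1)), a refined Witt-index count on $w^{\perp}$ shows the induced rank-$(R-1)$ form is hyperbolic, so a rank-$1$ subspace of the larger dimension $s_1-l_0$ with sign $\eta(\beta)$ exists and the product construction again reaches $\dim H=s-l_0-1$. When $\eta(\beta)=-\epsilon$ (case (2.2)) that form is elliptic and no such subspace exists, so I would instead return to \eqref{eq:d_r:3}: the bound $N(H_{l_0+1})\le 2q^{s-l_0-2}$ holds automatically since the relevant sum has at most $q^{l_0}$ terms, and to meet it I would build an $(l_0+1)$-dimensional $H_{l_0+1}$ whose $u$-coset of size $q^{l_0}$ carries $f\equiv c$ for a constant $c$ with $\eta(c)=\eta(\beta+c)=\epsilon$. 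The condition $\eta(c)=\epsilon$ is now forced because the coset must have the \emph{maximal} isotropic dimension $l_0$ (which needs the ambient form hyperbolic), and together with $\eta(\beta+c)=\epsilon$ and $\eta(\beta)=-\epsilon$ such a $c$ exists precisely when $q>3$ (for $q=3$ the two square-class requirements are incompatible), which is exactly where the hypothesis enters. Substituting gives $N(H_{l_0+1})=2q^{s-l_0-2}$, hence $d_{l_0+1}=q^{s-1}-2q^{s-l_0-2}$, and the range is complete.
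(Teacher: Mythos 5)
Your proposal follows essentially the same route as the paper: part (1) by maximizing $N(H_r)$ from Lemma~\ref{lem:d_r:2} over isotropic-shifted cosets achieving the extremal value of the $\upsilon$- or $\eta$-sum, part (2) by pairing the upper bound of Corollary~\ref{cor:dh} with product subspaces $J\times T_{\alpha}$ having a rank-one factor of sign $\eta(\beta)$, and the odd-$R$ boundary $r=l_0+1$ treated exactly as in the paper, including the two square-class conditions on the constant $c$ that force the hypothesis $q>3$ in case (2.2). The only real difference is one of sourcing: where you sketch direct Witt-decomposition arguments (and lean on Lemma~\ref{lem:2}), the paper instead invokes Lemma~\ref{lem:2-2}, \cite[Theorem 2]{LL20-0} and the proof of \cite[Theorem 1]{LF21} to produce the required rank-one subspaces of prescribed sign.
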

\begin{proof}
\rm{(1)} We discuss case by case. \\
\textbf{Case 1:}\ $R $ is even and $\varepsilon_{f}=\eta(-1)^{\frac{R}{2}}$. In this case, that is, $1\leq r \leq l_0$ with $l_0 = \frac{R}{2}$.
Suppose $H_r$ is an $r$-dimensional subspace of $\F$ and $\alpha\in  \Prj_{2}(H_{r})$.
By Lemma~\ref{lem:d_r:2}, we have
\begin{align}
N(H_r)=q^{s-(r+1)}\big[1+q^{-\frac{R}{2}}\sum\limits_{(u,-\alpha)\in H_{r}} \upsilon(\beta+f(x_u))\big].   \nonumber
\end{align}
 By Lemma~\ref{lem:2-2}, there exists an $r$-dimensional subspace $J_{r}$ of $\F_{q_1}$ such that $R_{J_{r}}=1, J_r\cap \F_{q_1}^{\perp_f}=\{0\}$ and $f(\alpha_{0})=-4\beta$, for some $\alpha_{0}\in J_{r}$, which concludes that there exists an $(r-1)$-dimensional subspace $J_{r-1}$ of $J_{r}$ satisfying $f(J_{r-1})=0$.
Let $\alpha_{1},\alpha_{2},\cdots,\alpha_{r-1}$ be an $\mathbb{F}_{q}$-basis of $J_{r-1}$. 
Set
$$
\mu_{1}=\alpha_{1}+\alpha_{0},\mu_{2}=\alpha_{2}+\alpha_{0},\cdots,\mu_{r-1}=\alpha_{r-1}+\alpha_{0},\mu_{r}=\alpha_{0},
$$
it's obvious that $\mu_{1},\mu_{2},\cdots,\mu_{r-1},\mu_{r}$ is an $\mathbb{F}_{q}$-basis of $J_{r}$.
Taking
$
H_{r}=\Big\langle(L_f(\mu_{1}),-\alpha),(L_f(\mu_{2}),-\alpha),\cdots,(L_f(\mu_{r-1}),-\alpha),(L_f(\mu_{r}),-\alpha)\Big\rangle
$,
then $N(H_r)$ reaches its maximum
$$ N(H_r)=q^{s-(r+1)}\Big(1+(q-1)q^{r-1-\frac{R}{2}}\Big) = q^{s-(r+1)} + (q-1)q^{s-(2+\frac{R}{2})}. $$
So, the desired result is obtained by Lemma~\ref{lem:d_r:2} and \eqref{eq:d_r:3}.

\textbf{Case 2:}\ $R $ is even and $\varepsilon_{f}=-\eta(-1)^{\frac{R}{2}}$. In this case, $l_0 = \frac{R}{2}$, that is, $1\leq r\leq\frac{R}{2}$.
Suppose $H_r$ is an $r$-dimensional subspace of $\F$ and $\alpha\in  \Prj_{2}(H_{r})$.
By Lemma~\ref{lem:d_r:2}, we have
\begin{align}
N(H_r)=q^{s-(r+1)}\big[1-q^{-\frac{R}{2}}\sum\limits_{(u,-\alpha)\in H_{r}} \upsilon(\beta+f(x_u))\big].   \nonumber
\end{align}
By Lemma~\ref{lem:2-2}, there exists an $r$-dimensional subspace $J_{r}$ of $\F_{q_1}$ such that $R_{J_{r}}=1, J_r\cap \F_{q_1}^{\perp_f}=\{0\}$ and $f(\alpha_{0})=-4\beta+4$, , for some $\alpha_{0}\in J_{r}$, which concludes that there exists an $(r-1)$-dimensional subspace $J_{r-1}$ of $J_{r}$ satisfying $f(J_{r-1})=0$.
Let $\alpha_{1},\alpha_{2},\cdots,\alpha_{r-1}$ be an $\mathbb{F}_{q}$-basis of $J_{r-1}$, then
$$
\mu_{1}=\alpha_{1}+\alpha_{0},\mu_{2}=\alpha_{2}+\alpha_{0},\cdots,\mu_{r-1}=\alpha_{r-1}+\alpha_{0},\mu_{r}=\alpha_{0},
$$
 is an $\mathbb{F}_{q}$-basis of $J_{r}$. 
Taking
$$
H_{r}=\Big\langle(L_f(\mu_{1}),-\alpha),(L_f(\mu_{2}),-\alpha),\cdots,(L_f(\mu_{r-1}),-\alpha),(L_f(\mu_{r}),-\alpha)\Big\rangle,
$$
then $N(H_r)$ reaches its maximum
$$ N(H_r)=q^{s-(r+1)}\Big(1+q^{r-1-\frac{R}{2}}\Big) = q^{s-(r+1)} + q^{s-(2+\frac{R}{2})}. $$
So, the desired result is obtained by Lemma~\ref{lem:d_r:2} and \eqref{eq:d_r:3}.

\textbf{Case 3:}\ \ $R $ is odd. In this case, that is, $1\leq r\leq l_0$ with $l_0 = \frac{R-1}{2}$. Suppose $H_r$ is an $r$-dimensional subspace of $\F$ and $\alpha\in  \Prj_{2}(H_{r})$.
By Lemma~\ref{lem:d_r:2}, we have
\begin{align}\label{eq:nh}
|N(H_r)|=
    	q^{s-(r+1)}\big[1+\varepsilon_f(p^*)^{-\frac{m(R-1)}{2}}\sum\limits_{(u,-\alpha)\in H_{r}}\eta(\beta+f(x_u))\big].
\end{align}
Take an element $a\in\F_q^*$ with $\eta(\beta+a) =\eta(-1)^{\frac{R-1}{2}}\varepsilon_f$. By Lemma~\ref{lem:2-2}, there exists an $r$-dimensional subspace $J_{r}$ of $\F_{q_1}$ such that $R_{J_{r}}=1, J_r\cap \F_{q_1}^{\perp_f}=\{0\}$ and $f(\alpha_{0})=4a$, for some $\alpha_{0}\in J_{r}$, which concludes that there exists an $(r-1)$-dimensional subspace $J_{r-1}$ of $J_{r}$ satisfying $f(J_{r-1})=0$.
Let $\alpha_{1},\alpha_{2},\cdots,\alpha_{r-1}$ be an $\mathbb{F}_{q}$-basis of $J_{r-1}$, then
$$
\mu_{1}=\alpha_{1}+\alpha_{0},\mu_{2}=\alpha_{2}+\alpha_{0},\cdots,\mu_{r-1}=\alpha_{r-1}+\alpha_{0},\mu_{r}=\alpha_{0},
$$
 is an $\mathbb{F}_{q}$-basis of $J_{r}$. 
Set
$$
H_{r}=\Big\langle(L_f(\mu_{1}),-\alpha),(L_f(\mu_{2}),-\alpha),\cdots,(L_f(\mu_{r-1}),-\alpha),(L_f(\mu_{r}),-\alpha)\Big\rangle.
$$
Then $N(H_r)$ reaches its maximum
$$ N(H_r)=q^{s-(r+1)}\Big(1+q^{r-1-\frac{R-1}{2}}\Big) = q^{s-(r+1)} + q^{s-(2+\frac{R-1}{2})}. $$
So, the desired result is obtained by Lemma~\ref{lem:d_r:2} and \eqref{eq:d_r:3}.

\rm{(2)} We also discuss case by case. \\
\textbf{Case 1:} $R $ is even, or $R $ is odd and $\eta(\beta)=\eta(-1)^{\frac{R-1}{2}}\varepsilon_f$, when $l_0< r\leq s-1$, we have 
$1\leq s-r\leq s-l_0-1$. Let $T_{\alpha}=\Big\{x\in \mathbb{F}_{q_2}:\ \mathrm{Tr}_q^{q_2}(\alpha x)=0\Big\}$.
It is easy to know that $\dim(T_{\alpha})=s_2-1$.
By the formula~\eqref{eq:d_r:2} and \cite[Theorem 2]{LL20-0} and Lemma~\ref{lem:2-2}, we can obtain that $\max\{|H_{s_1-l_0}\cap \overline{D}_{\beta}|:H_{s_1-l_0}\in [\E,s_1-l_0]_q\}=2q^{s_1-l_0-1}$, which concludes that there exists an $(s_1-l_0)$-dimensional subspace $J$ of $\E$ such that $|J\cap \overline{D}_{\beta}|=2q^{s_1-l_0-1}$, that is, $R_J=1, \varepsilon_{J}=\eta(\beta)$.
Hence $H_{s-r} = J_1\times T_\alpha$ is a $(s-r)$-dimensional subspace of $J\times T_{\alpha}$ such that $\dim(J_1)\geq 1$ and $R_{J_1}=1$, then we have $|D\cap H_{s-r}| =\max\{|D\cap H_{s-r}|:H_{s-r}\in[\F,s-r]_q\}=2q^{s-r-1}$,
then, $$d_{r}(C_D)=q^{s-1}-2q^{s-r-1}.$$ 

\textbf{Case 2:} $q>3, R $ is odd and $\eta(\beta)=-\eta(-1)^{\frac{R-1}{2}}\varepsilon_f$, in this case $l_0=\frac{R-1}{2}$. 

When $r=l_0+1$, by the proof of \cite[Theorem 1]{LF21}, there exists an $(l_0+1)$-dimensional subspace $J$ of $\E$ such that $R_J=1, \varepsilon_{J}=\eta(-1)^{\frac{R-1}{2}}\varepsilon_f, J\cap \F_{q_1}^{\perp_f}=\{0\}$, which concludes that $|J\cap \overline{D}_c| =2q^{l_0}$ for any $c\in\F_q$ with $\eta(c)=\eta(-1)^{\frac{R-1}{2}}\varepsilon_f$. On the other hand, by $I_2(1)=\sum\limits_{x\in\F_q}\eta(1+x^2)=-1$, we can conclude that there exists an element $b\in\F_q^*$ such that $\eta(1+\gamma b^2)=-1$ for some non-square element $\gamma\in\F_q^*$, thus 
$\eta(\beta+\gamma\beta b^2)= \eta(-1)^{\frac{R-1}{2}}\varepsilon_f$.
Let $c=\gamma\beta b^2$, then by the above analysis, $|J\cap\overline{D}_c| =2q^{l_0}$. Take $\alpha_{0}\in J$ satisfying $f(\alpha_{0})=4c$. From $R_J=1$, there exists an $l_0$-dimensional subspace $J'$ of $J$ satisfying $f(J')=0$.
Let $\alpha_{1},\alpha_{2},\cdots,\alpha_{l_0}$ be an $\mathbb{F}_{q}$-basis of $J'$, then
$$
\mu_{1}=\alpha_{1}+\alpha_{0},\mu_{2}=\alpha_{2}+\alpha_{0},\cdots,\mu_{r-1}=\alpha_{l_0}+\alpha_{0},\mu_{l_0+1}=\alpha_{0},
$$
 is an $\mathbb{F}_{q}$-basis of $J$. 
Set
$$
H_{l_0}=\Big\langle(L_f(\mu_{1}),-\alpha),(L_f(\mu_{2}),-\alpha),\cdots,(L_f(\mu_{r-1}),-\alpha),(L_f(\mu_{r}),-\alpha)\Big\rangle.
$$
By Lemma~\ref{lem:d_r:2} and \eqref{eq:nh}, $N(H_{l_0+1})$ reaches its maximum $|N(H_r)|=2q^{s-1-(l_0+1)}$, hence $$d_{l_0+1}=q^{s-1}-2q^{s-1-(l_0+1)}.$$

 When $l_0+1< r\leq s-1$, that is, $1\leq s- r\leq s_1-R+l_0+s_2-1$. By Lemma~\ref{lem:2-2}, there exists an $(s_1-R+l_0)$-dimensional subspace $J$ of $\E$ such that $R_J=1, \varepsilon_{J}=\eta(\beta)$.
Note that the dimension of the subspace $J\times T_{\alpha}$ is $s-(l_0+2)$. Let $H_{s-r} = J_3\times H_2$ be a $(s-r)$-dimensional subspace of $J\times T_{\alpha}$ such that $\dim(J_3)\geq 1$ and $R_{J_3}=1$,
 then we have $|D\cap H_{s-r}| =\max\{|D\cap H_{s-r}|:H_{s-r}\in[\F,s-r]_q\}=2q^{s-r-1}$,
then, $$d_{r}(C_D)=q^{s-1}-2q^{s-r-1}.$$ 

\rm{(3)} When $ r= s$, by Lemma~\ref{lem:length} we have
$$
d_{r}(C_{\mathrm{D}})=n-\max\Big\{|D \cap H|: H \in [\F,s-r]_{q}\Big\}=n=q^{s-1}.
$$
\end{proof}

\section{Concluding Remarks}
 In this paper, using quadratic forms and trace function, we obtained a class of three-weight linear codes by a bivariate construction. By exponential sum theory and a formula of generalized Hamming weight, we determined the complete
 weight enumerators and weight hierarchies of these linear codes. Our results extended the recent work in \cite{LL22} and \cite{HLL24}. Most of the codes we constructed
are minimal, so they are suitable for constructing secret sharing schemes with interesting properties.

\section*{Acknowledgements}
 The research is supported by the National Science Foundation of China Grant No.12001312,
Key Projects in Natural Science Research of Anhui Provincial Department of Education No.2022AH050594 and Anhui Provincial Natural Science Foundation No.1908085MA02.

\section*{Declaration of competing interest}

The authors declare that they have no known competing financial interests or personal relationships that could have appeared to influence the work
reported in this paper.

\end{document}